\newtheorem{definition}{Definition}
\newtheorem{theorem}{Theorem}
\newtheorem{problem}{Problem}
\newtheorem{subproblem}{Subproblem}
\begin{document}

\title{Scalable Synthesis of Formally Verified Neural Value Function for Hamilton-Jacobi Reachability Analysis}

\author{\name Yujie Yang \email yangyj21@mails.tsinghua.edu.cn \\        
        \addr School of Vehicle and Mobility, Tsinghua University, \\
        Beijing, 100084, China \\
        \addr Robotics Institute, Carnegie Mellon University, \\
        Pittsburgh, PA, 15289, USA \\
        \AND
        \name Hanjiang Hu \email hanjianh@andrew.cmu.edu \\
        \name Tianhao Wei \email twei2@andrew.cmu.edu \\
        \addr Robotics Institute, Carnegie Mellon University, \\
        Pittsburgh, PA, 15289, USA \\
        \AND
        \name Shengbo Eben Li \email lishbo@tsinghua.edu.cn \\
        \addr School of Vehicle and Mobility \& College of AI, Tsinghua University, \\
        Beijing, 100084, China \\
        \AND
        \name Changliu Liu \email cliu6@andrew.cmu.edu \\
        \addr Robotics Institute, Carnegie Mellon University, \\
        Pittsburgh, PA, 15289, USA
}

% For research notes, remove the comment character in the line below.
% \researchnote

\maketitle

\begin{abstract}
Hamilton-Jacobi (HJ) reachability analysis provides a formal method for guaranteeing safety in constrained control problems. It synthesizes a value function to represent a long-term safe set called feasible region. Early synthesis methods based on state space discretization cannot scale to high-dimensional problems, while recent methods that use neural networks to approximate value functions result in unverifiable feasible regions. To achieve both scalability and verifiability, we propose a framework for synthesizing verified neural value functions for HJ reachability analysis. Our framework consists of three stages: pre-training, adversarial training, and verification-guided training. We design three techniques to address three challenges to improve scalability respectively: boundary-guided backtracking (BGB) to improve counterexample search efficiency, entering state regularization (ESR) to enlarge feasible region, and activation pattern alignment (APA) to accelerate neural network verification. We also provide a neural safety certificate synthesis and verification benchmark called Cersyve-9, which includes nine commonly used safe control tasks and supplements existing neural network verification benchmarks. Our framework successfully synthesizes verified neural value functions on all tasks, and our proposed three techniques exhibit superior scalability and efficiency compared with existing methods.
\end{abstract}

\section{Introduction}
\label{sec: intro}
Safety is a primary concern in controller design, especially for control systems interacting with the physical world, such as autonomous driving and robot locomotion.
In these systems, safety is usually specified by inequality constraints on system states.
For example, safety constraints in a robot locomotion task require the distance between the robot and surrounding obstacles to be always positive.
Such safety constraints must be satisfied not only in a single time step but also in all time steps over an infinite horizon.
When designing a controller, it is important to know from which states it can satisfy the infinite-horizon safety constraints and from which states it cannot.
The deployment of any controller should be restricted to a set of states where long-term constraint satisfaction is ensured, which is called a \textit{feasible region}.

Hamilton-Jacobi (HJ) reachability analysis provides a formal method for computing feasible regions of control systems with safety constraints \shortcite{bansal2017hamilton}.
In HJ reachability analysis, a feasible region is represented by the zero-sublevel set of a value function, which is defined as the maximum value of the constraint function over a trajectory.
In general nonlinear systems, exactly computing the value function is difficult because it involves solving the HJ partial differential equation, which does not have a closed-form solution in most cases \shortcite{bansal2017hamilton}.
Traditional methods numerically solve the HJ equation on a grid representing a discretization of the state space \shortcite{mitchell2007comparing,mitchell2008flexible}.
These methods' computational complexity grows exponentially with state dimension, making them intractable in high-dimensional systems.
Although some techniques, such as system decomposition \shortcite{chen2016fast}, are proposed for accelerating value function computation, they only apply to some special scenarios.
Recent methods use neural networks to approximate the solution of the HJ equation by minimizing the error between the two sides of the equation \shortcite{fisac2019bridging,bansal2021deepreach}.
The error is computed on states randomly sampled in the state space and minimized by gradient-based optimization algorithms.
Although these methods scale well to high-dimensional systems, the zero-sublevel sets of their value functions are no longer guaranteed to be valid feasible regions due to neural network approximation errors.
Specifically, their zero-sublevel sets may violate two basic properties of a feasible region: \textit{constraint satisfaction} and \textit{forward invariance}.
Constraint satisfaction means all states in a feasible region satisfy the safety constraint themselves.
Forward invariance means starting from any state in a feasible region, its subsequent states can always be kept in this region by some control policy.
Violating these two properties may cause possible constraint violations, even starting from a state inside the zero-sublevel set, making the value function unreliable for safe control.

The problem of invalid feasible regions necessitates verification of neural HJ reachability value functions.
Recently, some researchers have begun to use neural network verification tools to formally verify and synthesize neural safety certificates, such as neural control barrier functions (CBFs) \shortcite{wang2023simultaneous,zhang2024exact} and neural control Lyapunov functions (CLFs) \shortcite{chang2019neural,abate2020formal}.
Similar to HJ reachability value function, these safety certificates also represent feasible regions by their zero-sublevel sets.
The difference is that these functions are not defined through equations but through certain inequality conditions.
For example, the time derivative of a CBF must be upper bounded by an extended class $\mathcal{K}$ function, and a CLF must be a positive definite function with negative time derivatives.
Existing works try to verify whether these conditions are strictly satisfied by the neural safety certificates in the entire state space.
Such problems can be transformed into standard neural network verification problems, which can be solved by existing verification tools \shortcite{liu2021algorithms}.
This verification procedure can also be embedded into neural safety certificate synthesis, resulting in verification-guided training methods \shortcite{wang2023simultaneous,zhang2024exact}.
If verification fails on a synthesized safety certificate, the found counterexamples are added to the dataset, and the safety certificate is further trained on these counterexamples.
This process is repeated until the neural safety certificate is successfully verified.
However, these methods currently only work on low-dimensional or linear systems and are difficult to scale to high-dimensional nonlinear systems and real-world control tasks.
Through our study, we discover three main challenges that restrict the scalability of these methods:
\begin{itemize}
\item \textbf{Difficulty of searching and eliminating counterexamples.}
Successful verification requires strict satisfaction of inequality conditions in all states, and not a single counterexample is allowed.
However, counterexamples become extremely sparse in high-dimensional spaces, making it difficult to find and eliminate all of them.
\item \textbf{Severe shrinkage of feasible region.}
When training neural safety certificates on counterexamples, their zero-sublevel sets tend to shrink so that the inequality conditions can be more easily satisfied.
Although slight shrinkage is sometimes acceptable, severe shrinkage can be a serious problem because it results in overly conservative control policies and poor control performance.
We find in our experiments that, in some cases, the feasible region shrinks so much that it even disappears.
\item \textbf{High computational complexity of verification.}
The computational complexity of neural network verification algorithms typically grows exponentially with the input dimension, which equals the system's state dimension.
As shown in our experiments, verifying a relatively small value network in a 4-dimensional system can take more than 2 hours on a common computing platform.
\end{itemize}

\begin{figure}
    \centering
    \includegraphics[width=\textwidth]{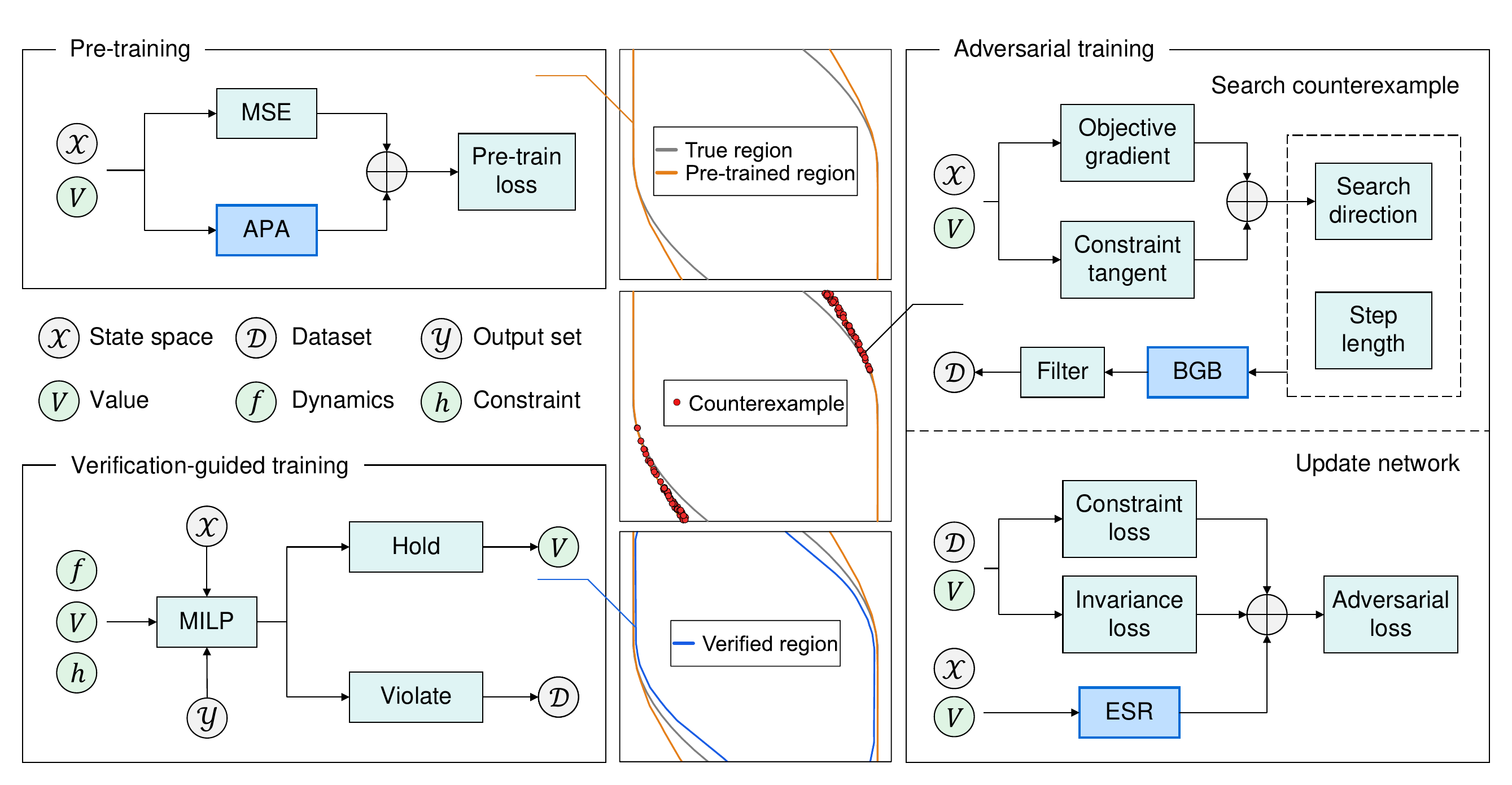}
    \caption{Neural HJ reachability value function synthesis framework. Our framework consists of three stages: pre-training, adversarial training, and verification-guided training. The circles of state space and dataset in pre-training and adversarial training mean randomly sampling states from these sets, while in verification-guided training, the state space and output set are in analytical form. Our three key techniques are highlighted in blue boxes, namely APA, BGB, and ESR. The three middle figures show the synthesis results on a 2D task Double Integrator. The pre-trained feasible region is larger than the true feasible region, indicating that it is invalid. After adversarial training and verification-guided training, the feasible region becomes valid. Details of the experiment can be found in Section \ref{sec: experiment}.}
    \label{fig: framework}
\end{figure}
% TODO (done): 说明sample state space，减少缩写

The above analysis reveals a key challenge of synthesizing neural HJ reachability value functions: the trade-off between scalability and verifiability.
Traditional numerical methods based on state space discretization ensure verifiability but sacrifice scalability.
Neural HJ reachability methods scale well to high-dimensional systems but their synthesized networks are not verified.
Recent verification-guided training methods provide a promising way to synthesize verifiable neural safety certificates, but they again sacrifice scalability because of the aforementioned three challenges.
To achieve both scalability and verifiability, we propose a scalable framework for synthesizing formally verified neural HJ reachability value functions, as shown in Figure~\ref{fig: framework}.
% TODO (done): 为什么要把3大类technique combine到一起？为了efficiency
Our framework consists of three stages: pre-training, adversarial training, and verification-guided training.
Pre-training and verification-guided training are widely used in existing neural safety certificate synthesis methods \shortcite{wang2023simultaneous,chang2019neural,abate2020formal}.
The former aims to obtain a reasonable approximation of the value function, while the latter aims to fine-tune the network on counterexamples until it becomes a valid safety certificate.
However, directly performing verification on a pre-trained value network is inefficient because a pre-trained network usually has a large number of counterexamples, while only a single one can be found in each verification step.
To improve fine-tuning efficiency, we add an adversarial training stage between them, which searches and eliminates counterexamples in a batched manner.
Verification-guided training does not start until adversarial training can hardly find any counterexamples.
% TODO (done): 如果只用3大类SOTA的方法放到一起会有什么问题？
% TODO (done): 为什么不用其他verification方法，比如ab-crown？因为规模不大，效果不一定好。现有方法针对的是image问题
Although there are many advanced methods for neural network verification, e.g., $\alpha,\beta$-CROWN \shortcite{xu2020fast,wang2021beta}, we choose a basic method: mixed integer linear programming (MILP) \shortcite{tjeng2017evaluating}.
This is because most advanced methods are designed for robust image classification problems, while we consider safety certificate synthesis problems, which have much lower input dimensions and smaller neural networks.
In such small-scale problems, those advanced methods perform even worse than the basic MILP \shortcite{wang2021beta}.
Note that this does not mean that our problem is simpler than robust image classification; in some sense, our problem is even harder as we require certain properties to hold in the entire input space instead of a small disturbance set.
With our three-stage framework in place, the next step is to find specific methods to solve the problem in each stage.
Although there are many well-studied methods for these three stages, we find that directly using state-of-the-art methods makes the framework quickly fail as the problem dimension increases, because these methods cannot effectively solve the aforementioned three challenges.
To this end, we propose three techniques to improve the scalability of our framework, each designed to address one of the three challenges. 
First, we find that counterexample search in adversarial training is difficult because existing gradient-based search methods are inefficient in searching along the boundary of feasible region.
We propose a backtracking method that rotates the search direction towards the boundary to accelerate counterexample search.
Second, we exploit the fact that constraint-satisfying states that enter the feasible region in one step are also feasible and penalize the value of these states in fine-tuning to alleviate the shrinkage of feasible region.
Third, we discover that the number of linear segments of the neural network greatly affects the computational complexity of solving MILP.
We design a regularization term for network pre-training to reduce the number of linear segments and thus accelerate verification.
Our main contributions are summarized as follows.
\begin{itemize}
\item We propose a scalable framework for synthesizing formally verified neural HJ reachability value functions.
Our framework synthesizes neural value functions from coarse to fine with high efficiency through three stages: pre-training, adversarial training, and verification-guided training.
Pre-training approximates the solution to an HJ equation by gradient descent on data samples and obtains a value network that is likely to be invalid.
Adversarial training searches counterexamples in a batched manner based on necessary and sufficient conditions for feasible region and fine-tunes the value network to eliminate counterexamples.
Verification-guided training formulates the value network verification problem as an MILP and further fine-tunes the network on counterexamples until it is verified.
\item To accelerate counterexample search in adversarial training, we propose an algorithm called boundary-guided backtracking (BGB) that efficiently searches along the boundary of feasible region.
When approaching the boundary, BGB rotates the search direction towards the tangent plane of the boundary so that larger step sizes can be taken without stepping out of the feasible region.
\item To alleviate feasible region shrinkage, we present entering state regularization (ESR) that adds a penalty term into the loss function when fine-tuning the value network.
ESR first identifies constraint-satisfying states that enter the feasible region in one step and then encourages the values of these states to be negative so that they are included in the feasible region.
\item To accelerate MILP-based verification, we design a regularization method called activation pattern alignment (APA) for pre-training of value network and dynamics network.
APA reduces linear segments of a neural network by penalizing the difference in the activation patterns of neighboring states while minimizing the loss of network's approximation ability.
\item We provide a benchmark called Cersyve-9 for neural safety certificate synthesis and verification in safe control problems, which supplements existing neural network verification benchmarks.
Cersyve-9 contains nine commonly used control tasks with various dimensions, including linear and nonlinear system dynamics and safety constraints.
Extensive experiments on Cersyve-9 demonstrate the effectiveness, scalability, and efficiency of our synthesis framework.
The code of our benchmark is available on GitHub\footnote{https://github.com/intelligent-control-lab/Cersyve.jl}.
\end{itemize}

\section{Related Works}
\subsection{Synthesis of Neural HJ Reachability Value Function}
Traditional HJ reachability analysis computes the value function by numerically solving the HJ PDE on a discretized grid of the state space.
The computational complexity of this method grows exponentially with state dimension, making it inapplicable to high-dimensional problems.
To deal with this issue, researchers have explored using neural networks to approximate the value function.

A straightforward method for learning a neural value function is to minimize the error between the two sides of the HJ equation by gradient descent on state samples \shortcite{darbon2020overcoming,bansal2021deepreach}.
However, this method is hard to converge because the HJ equation does not yield a contraction mapping and thus does not satisfy the convergence conditions of fixed point iteration.
In practice, this method typically requires specific initialization of the value function or rely on certain neural network architectures to converge to the correct solution.
\shortciteA{fisac2019bridging} solve this problem by introducing a discount factor into the value function, modifying the original maximum constraint formulation of HJ reachability to a maximum discounted constraint formulation.
Under the discounted formulation, the HJ equation also changes to a discounted version, which yields a contraction mapping and enables convergence of fixed point iteration with an arbitrary initialization.
This makes temporal difference learning methods in reinforcement learning (RL) applicable to computing the value function.
Since then, the discounted value function has been extensively used for neural HJ reachability analysis in safe control tasks, especially when combined with RL algorithms.
For example, \shortciteA{hsu2021safety} consider reach-avoid problems and add goal information to the value function proposed by \shortcite{fisac2019bridging}.
They derive a discounted reach-avoid Bellman backup and prove that their reach-avoid Q-learning algorithm converges to an arbitrarily tight conservative approximation of the reach-avoid set.
\shortciteA{yu2023safe} establish a self-consistency condition for computing the value function of a specific policy.
They use the value function as the objective function and constraint for shield and main policies, respectively.
The value function is also used for policy switching during training and safety shield during evaluation.
\shortciteA{he2024agile} applies the method proposed by \shortcite{hsu2021safety} to train a reach-avoid value function in a quadrupedal robot locomotion task.
Their value function controls the switch between an agile policy and a recovery policy, and also guides the recovery policy as an objective function.

Despite these advancements, there is an inherent problem in approximating value function with neural networks: the zero-sublevel set of the value network may not be a valid feasible region due to approximation errors.
Specifically, the zero-sublevel set may violate the two basic properties of a feasible region: constraint satisfaction and forward invariance.
This can be problematic when using these value networks for constructing constraints in policy optimization or monitoring unsafe actions in safety filters.
With these two properties unsatisfied, even if the current state is inside the zero-sublevel set, the system may still run into a constraint-violating state sometime in the future.
This problem necessitates additional verification of the value network, which is not addressed by existing works.

\subsection{Verification of Neural Safety Certificates}
Safety certificates are real-valued functions of system state that are used to represent feasible regions and construct constraints or safety filters of control policy.
HJ reachability value function is a kind of safety certificate, and two other representative examples are CBF and CLF.
CBF and CLF are defined through certain inequality conditions which, when strictly satisfied, ensure that the zero-sublevel sets of these safety certificates are feasible regions.
Similar to HJ reachability value function, CBF and CLF can also be represented by neural networks, and the resulting neural CBF and CLF also face the problem of verification.

With the development of neural network verification tools \shortcite{liu2021algorithms}, some recent studies have begun to formally verify the inequality conditions of neural CBF and CLF.
For example, \shortciteA{zhang2024exact} first decompose a neural CBF into piecewise linear segments and then solve a nonlinear program to verify safety of each segment.
To deal with the non-differentiable ReLU activation function, they leverage a generalization of Nagumo’s theorem to prove invariance of sets with non-smooth boundaries and derive necessary and sufficient conditions for safety.
While \shortciteA{zhang2024exact} focus on verifying a given neural CBF, verification of neural safety certificates can also be combined with their training process.
This yields a verification-guided training scheme of neural safety certificates, which iterates between a learner and a verifier.
The learner updates the certificate on data samples to enforce the satisfaction of safety properties.
The verifier either verifies the certificate's validity in the entire state space or generates counterexamples and adds them to the dataset for further training.
This iterative procedure terminates when no counterexample is found by the verifier, in which case the neural safety certificate is formally verified.
This training scheme is widely used for synthesizing formally verified neural CBFs \shortcite{peruffo2021automated,abate2021fossil,chen2024verification} and neural CLFs \shortcite{chang2019neural,abate2020formal,dai2021lyapunov}.
To improve the efficiency of the verifier, \shortciteA{wang2023simultaneous} leverage the Branch-and-Bound scheme to identify partitions of the state space that are not guaranteed to satisfy CBF conditions.
Additional data from these partitions are incorporated into the training dataset for further optimization.
To accelerate neural CBF training, some works exploit Lipschitz continuity property of neural CBF and use robust training techniques to ensure the satisfaction of CBF conditions \shortcite{anand2023formally,tayal2024learning}.

Despite these exploratory works, challenges remain in scaling verifiable neural safety certificate synthesis methods to high-dimensional problems.
Most existing methods only apply to control systems with less than four state dimensions \shortcite{abate2020formal,wang2023simultaneous,tayal2024learning} or special systems with four to eight dimensions whose state consists of the derivatives of the same variable, and the dynamics is described by a single scalar ordinary differential equation \shortcite{peruffo2021automated,abate2021fossil}.
\shortciteA{chang2019neural} and \shortciteA{dai2021lyapunov} synthesize verified neural Lyapunov functions on six-dimensional humanoid and quadrotor systems respectively, but their training takes hours, and the obtained feasible regions are small, which may result in overly conservative control policies.
Through our study, we discover three main challenges that restrict the scalability of these methods: 1) difficulty of searching and eliminating counterexamples, 2) severe shrinkage of feasible region, and 3) high computational complexity of verification.
We propose three techniques to mitigate these three challenges respectively, and they together significantly improve the scalability of our synthesis framework.

\section{Preliminaries}
\subsection{State Constraint and Feasible Region}
Consider a discrete-time deterministic control system:
\begin{equation}
\label{eq: system dynamics}
    x_{t+1}=f(x_t,u_t),
\end{equation}
where $x\in\mathcal{X}$ is the state, $u\in\mathcal{U}$ is the control input, $f$ is the system dynamics, and $t\in\mathbb{N}$ is the time step.
A control policy maps a state to a control input, i.e., $\pi:\mathcal{X}\to\mathcal{U}$.
The closed-loop dynamics under control policy $\pi$ is denoted as $f_\pi(x)\coloneqq f(x,\pi(x))$.
Safety constraint of the system is specified by:
\begin{equation}
\label{eq: state constraint}
    h(x_t)\le0, \forall t\in\mathbb{N},
\end{equation}
where $h$ is the constraint function and the inequalities \eqref{eq: state constraint} are called state constraints.
% Any control policy implemented in the system must satisfy the state constraints for all time steps.
Before implementing a control policy, it is necessary to identify the states where the closed-loop system always satisfies the state constraints.
Such states constitute the feasible region of the policy, which is defined as follows.
\begin{definition}[Feasible region]
\label{def: feasible region}
A feasible region of policy $\pi$, denoted as $\mathrm{X}^\pi$, is a subset of the state space $\mathcal{X}$ such that $\forall x\in\mathrm{X}^\pi$, $h(x_t)\le0, t\in\mathbb{N}$, where $x_0=x$ and $x_{t+1}=f_\pi(x_t)$.
\end{definition}
From the above definition, we can derive a set of necessary and sufficient conditions for feasible region: constraint satisfaction and forward invariance.
\begin{theorem}[necessary and sufficient conditions for feasible region]
\label{thm: feasible region conditions}
$\mathrm{X}^\pi$ is a feasible region of $\pi$ if and only if
\begin{enumerate}
    \item (Constraint satisfaction) $\forall x\in\mathrm{X}^\pi, h(x)\le0$.
    \item (Forward invariance) $\forall x\in\mathrm{X}^\pi, f_\pi(x)\in\mathrm{X}^\pi$.
\end{enumerate}
\end{theorem}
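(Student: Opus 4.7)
The plan is to prove the two directions of the biconditional separately, each by a short argument that just unpacks Definition~\ref{def: feasible region} using a one-step time shift and a trivial induction on the trajectory index.

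For necessity, I would assume $\mathrm{X}^\pi$ is a feasible region in the sense of Definition~\ref{def: feasible region} and take any $x\in\mathrm{X}^\pi$. Let $\{x_t\}_{t\in\mathbb{N}}$ be the closed-loop trajectory with $x_0=x$ and $x_{t+1}=f_\pi(x_t)$. Setting $t=0$ in $h(x_t)\le0$ immediately gives $h(x)\le0$, i.e.\ constraint satisfaction. For forward invariance, let $y\coloneqq f_\pi(x)=x_1$ and consider the trajectory $\{y_t\}$ starting from $y_0=y$: by determinism of $f_\pi$, this is the one-step time shift $y_t=x_{t+1}$, so $h(y_t)=h(x_{t+1})\le0$ for every $t\in\mathbb{N}$. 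Hence $y$ itself initiates a constraint-satisfying trajectory, so $y=f_\pi(x)\in\mathrm{X}^\pi$.

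For sufficiency, I would assume both conditions hold and fix any $x\in\mathrm{X}^\pi$. A trivial induction on $t$ shows $x_t\in\mathrm{X}^\pi$ for all $t\in\mathbb{N}$: the base case is $x_0=x\in\mathrm{X}^\pi$, and if $x_t\in\mathrm{X}^\pi$ then forward invariance gives $x_{t+1}=f_\pi(x_t)\in\mathrm{X}^\pi$. Applying constraint satisfaction to each $x_t$ yields $h(x_t)\le0$ for all $t\in\mathbb{N}$, so $\mathrm{X}^\pi$ meets Definition~\ref{def: feasible region}.

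The only subtlety — and what I expect to be the sole point needing care — is the forward-invariance step of the necessity direction. To conclude $y\in\mathrm{X}^\pi$ from the fact that $y$ initiates a constraint-satisfying trajectory, one must read $\mathrm{X}^\pi$ as consisting of precisely those states that initiate such trajectories (i.e.\ as the maximal set meeting the clause in Definition~\ref{def: feasible region}). This matches how the object is used throughout the paper; under a weaker reading the theorem would fail on trivial examples such as a singleton $\{x\}$ whose successor $f_\pi(x)$ lies outside it. Once that convention is made explicit, both directions are immediate and no further machinery is needed.
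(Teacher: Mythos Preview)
Your argument is exactly the direct unpacking of Definition~\ref{def: feasible region} that the paper has in mind; the paper in fact omits the proof entirely, saying only that it ``follows directly from Definition~\ref{def: feasible region}.'' Your closing caveat about the necessity of forward invariance is correct and sharper than anything the paper supplies: under the literal wording of Definition~\ref{def: feasible region}, where \emph{any} subset whose members initiate safe trajectories counts as ``a feasible region,'' forward invariance is not necessary, and your singleton counterexample is valid. One small correction to your last paragraph: the paper's subsequent usage --- e.g., Theorem~\ref{thm: feasible region expansion}, which enlarges ``a feasible region'' to a strictly larger one --- actually favors the \emph{non}-maximal reading, so the imprecision sits in the paper's ``only if'' claim itself rather than being resolved by context; fortunately only the sufficiency direction is ever invoked downstream.
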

The proof of Theorem \ref{thm: feasible region conditions} follows directly from Definition \ref{def: feasible region} and is omitted here.
Constraint satisfaction means all states in a feasible region satisfy the state constraint at the current time step.
Forward invariance means for any state in the feasible region, its next state under the closed-loop dynamics still lies in this region.
These two conditions are useful for determining and identifying feasible regions because they only involve a single-step state transition instead of infinite steps as Definition \ref{def: feasible region}.
They are also used as the conditions for verifying neural HJ reachability value functions in this paper.

\subsection{Hamilton-Jacobi Reachability Analysis}
Hamilton-Jacobi (HJ) reachability analysis identifies the feasible region of a control system with state constraints by computing a value function.
In a closed-loop system under a given control policy, the value function is defined as the maximum value of the constraint function in a trajectory sampled by the policy.
\begin{definition}[HJ reachability value function]
\label{def: HJ reachability value function}
The HJ reachability value function of control policy $\pi$ is defined as
\begin{equation}
    V^\pi(x)\coloneqq\max_{t\in\mathbb{N}} h(x_t),
\end{equation}
where $x_0=x$ and $x_{t+1}=f_\pi(x_t)$.
\end{definition}
A desirable property of the value function is that its zero-sublevel set is a feasible region.
\begin{theorem}
\label{thm: zero-sublevel set feasible}
The zero-sublevel set of $V^\pi$, denoted as $\mathrm{X}_{V^\pi}\coloneqq\{x\in\mathcal{X}|V^\pi(x)\le0\}$, is a feasible region of $\pi$.
\end{theorem}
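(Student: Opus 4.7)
The plan is to invoke Theorem \ref{thm: feasible region conditions} and verify its two conditions, constraint satisfaction and forward invariance, directly from the definition of $V^\pi$. So I would pick an arbitrary $x \in \mathrm{X}_{V^\pi}$, unfold the definition $V^\pi(x)=\max_{t\in\mathbb{N}} h(x_t)$ with $x_0=x$, $x_{t+1}=f_\pi(x_t)$, and exploit the fact that $\max_t h(x_t)\le 0$ bounds every individual term $h(x_t)$.

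First I would verify constraint satisfaction. For any $x\in\mathrm{X}_{V^\pi}$, the inequality $V^\pi(x)\le 0$ gives $h(x_0)\le\max_{t\in\mathbb{N}} h(x_t)=V^\pi(x)\le 0$, and since $x_0=x$ this is exactly $h(x)\le 0$. This step is immediate and requires no additional machinery.

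Next I would verify forward invariance, which is the slightly more interesting step. Given $x\in\mathrm{X}_{V^\pi}$, let $\{x_t\}_{t\in\mathbb{N}}$ be the trajectory from $x$ under $f_\pi$, and observe that the trajectory starting from $f_\pi(x)=x_1$ is precisely the time-shifted sequence $\{x_{t+1}\}_{t\in\mathbb{N}}$. Hence
\begin{equation*}
V^\pi(f_\pi(x))=\max_{t\in\mathbb{N}} h(x_{t+1})=\max_{t\ge 1}h(x_t)\le \max_{t\ge 0}h(x_t)=V^\pi(x)\le 0,
\end{equation*}
so $f_\pi(x)\in\mathrm{X}_{V^\pi}$. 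The small point to be careful about is the indexing: the trajectory from $f_\pi(x)$ is not a new sequence but a suffix of the original one, which relies on the determinism of $f_\pi$ and the time-invariance of the dynamics in \eqref{eq: system dynamics}.

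I do not anticipate any real obstacle; the argument is essentially the Bellman/shift-invariance observation. The only thing worth flagging explicitly is that the $\max$ is over an infinite set $\mathbb{N}$ and so should be understood as a supremum, but since the bound $\le 0$ is used only through monotonicity of the supremum over subsets (removing the $t=0$ term can only decrease it), no continuity or attainment assumption is needed. Once both conditions are established, Theorem \ref{thm: feasible region conditions} yields the conclusion that $\mathrm{X}_{V^\pi}$ is a feasible region of $\pi$.
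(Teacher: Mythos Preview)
Your proposal is correct and follows essentially the same approach as the paper: verify the two conditions of Theorem~\ref{thm: feasible region conditions} by bounding $h(x)\le\max_{t\in\mathbb{N}}h(x_t)=V^\pi(x)\le0$ for constraint satisfaction, and using the shift identity $V^\pi(f_\pi(x))=\max_{t\ge1}h(x_t)\le\max_{t\ge0}h(x_t)=V^\pi(x)\le0$ for forward invariance. Your additional remarks on determinism and the supremum interpretation are fine but not strictly needed here.
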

\begin{proof}
We prove that $\mathrm{X}_{V^\pi}$ satisfies the two conditions in Theorem \ref{thm: feasible region conditions}.
$\forall x\in\mathrm{X}_{V^\pi}$, we have $h(x)\le\max_{t\in\mathbb{N}} h(x_t)=V^\pi(x)\le0$.
Thus, $\mathrm{X}_{V^\pi}$ satisfies constraint satisfaction.
$\forall x\in\mathrm{X}_{V^\pi}$, we have $V^\pi(f_\pi(x))=\max_{t\ge1} h(x_t)\le\max_{t\in\mathbb{N}} h(x_t)=V^\pi(x)\le0$.
Thus, $\mathrm{X}_{V^\pi}$ satisfies forward invariance.
Therefore, $\mathrm{X}_{V^\pi}$ is a feasible region of $\pi$.
\end{proof}
Theorem \ref{thm: zero-sublevel set feasible} makes the value function useful for representing feasible regions and synthesizing safe controllers.
For example, with a possibly unsafe nominal policy and a safe backup policy $\pi$, one can determine whether the nominal policy will lead to a possibly unsafe state by checking if the next state is in $\mathrm{X}_{V^\pi}$.
Specifically, if the next state is in $\mathrm{X}_{V^\pi}$, it is safe as $\pi$ can keep the state always in $\mathrm{X}_{V^\pi}$.
If the next state is not in $\mathrm{X}_{V^\pi}$, it is possibly unsafe as $\pi$ has no safety guarantee outside $\mathrm{X}_{V^\pi}$, and we should replace the nominal policy with $\pi$ to compute a safe action.
This zero-sublevel set property is one of the most fundamental properties of safety certificates.
In addition to HJ reachability value function, other safety certificates, such as CBF and CLF, also have similar properties.

In a stochastic control system, the HJ reachability value function satisfies an HJ PDE \shortcite{bansal2017hamilton}.
In a deterministic closed-loop system, the minimum and maximum operators on control inputs and disturbances in the HJ PDE can be omitted, resulting in a simplified equation called the risky self-consistency condition.
\begin{theorem}[Risky self-consistency condition]
Let $V^\pi$ be the value function of $\pi$, $\forall x\in\mathcal{X}$, we have
\begin{equation}
\label{eq: self-consistency condition}
    V^\pi(x)=\max\{h(x),V^\pi(f_\pi(x))\}.
\end{equation}
\end{theorem}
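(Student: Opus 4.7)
The plan is to prove the risky self-consistency condition directly from the definition of $V^\pi$ by splitting the maximum over $t\in\mathbb{N}$ into the $t=0$ term and the tail $t\ge 1$, and then recognizing that the tail is itself a value function evaluated at the one-step successor $f_\pi(x)$.

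First, I would fix $x\in\mathcal{X}$ and let $\{x_t\}_{t\in\mathbb{N}}$ be the trajectory defined by $x_0=x$ and $x_{t+1}=f_\pi(x_t)$. Applying Definition~\ref{def: HJ reachability value function} gives $V^\pi(x)=\max_{t\in\mathbb{N}} h(x_t)$. Using the elementary identity $\max_{t\ge 0} a_t = \max\{a_0,\ \max_{t\ge 1} a_t\}$ (valid because $\mathbb{N}$ is well-ordered and the maximum exists provided the trajectory is well defined), I would write $V^\pi(x)=\max\{h(x_0),\ \max_{t\ge 1} h(x_t)\} = \max\{h(x),\ \max_{t\ge 1} h(x_t)\}$.

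The key step is then a reindexing of the tail. I would define $y_s\coloneqq x_{s+1}$ for $s\in\mathbb{N}$ and observe that $y_0 = x_1 = f_\pi(x_0) = f_\pi(x)$ and $y_{s+1}=x_{s+2}=f_\pi(x_{s+1})=f_\pi(y_s)$. Hence $\{y_s\}$ is exactly the trajectory generated by $f_\pi$ starting from $f_\pi(x)$, so by Definition~\ref{def: HJ reachability value function} applied at the point $f_\pi(x)$, we get $\max_{t\ge 1} h(x_t) = \max_{s\ge 0} h(y_s) = V^\pi(f_\pi(x))$. Substituting back yields $V^\pi(x) = \max\{h(x),\ V^\pi(f_\pi(x))\}$, which is \eqref{eq: self-consistency condition}.

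There is no real obstacle here since the deterministic, closed-loop setting strips away the $\min_u\max_d$ operators that make the general HJ PDE delicate; the only thing to be careful about is implicitly assuming that $\max_{t\in\mathbb{N}} h(x_t)$ is attained (or interpreting it as a supremum). If the maximum need not exist, I would simply replace $\max$ with $\sup$ throughout, in which case the identity $\sup_{t\ge 0} a_t = \max\{a_0,\sup_{t\ge 1} a_t\}$ and the same reindexing argument go through unchanged.
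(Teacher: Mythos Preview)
Your argument is correct and is exactly the natural one: split off the $t=0$ term, reindex the tail, and recognize it as $V^\pi(f_\pi(x))$. The paper does not actually supply a separate proof of this theorem; it treats the identity as immediate and implicitly relies on the same reindexing (for instance, in the proof of Theorem~\ref{thm: zero-sublevel set feasible} it writes $V^\pi(f_\pi(x))=\max_{t\ge1} h(x_t)$ without further comment), so your approach is fully aligned with the paper's reasoning.
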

The risky self-consistency condition is a recursive relationship between the values of the previous and subsequent states.
In RL, the value function of reward also has a similar self-consistency condition.
Here, the name ``risky" distinguishes the HJ reachability value function from the reward value function, reflecting the former's relationship to safety constraints.
With the risky self-consistency condition, we can compute the value function by solving equation \eqref{eq: self-consistency condition}.
However, this equation does not have an analytical solution in most cases, and traditional numerical methods based on state space discretization cannot scale to high-dimensional systems.
This necessitates using a neural network to represent the value function and approximate the solution of equation \eqref{eq: self-consistency condition}.

\subsection{Synthesizing Neural HJ Reachability Value Function}
A straightforward method for fitting a neural network to the solution of \eqref{eq: self-consistency condition} is to minimize the error between the two sides of the equation.
However, this method is hard to converge because equation \eqref{eq: self-consistency condition} does not yield a contraction mapping.
To create a contraction mapping, \shortciteA{fisac2019bridging} modify the risky self-consistency condition to a discounted version:
\begin{equation}
\label{eq: discounted self-consistency condition}
    V^\pi(x)=(1-\gamma)h(x)+\gamma\max\{h(x),V^\pi(f_\pi(x))\},
\end{equation}
where $\gamma\in(0,1)$ is a discounted factor.
They prove that equation \eqref{eq: discounted self-consistency condition} induces a contraction mapping on the value function space under the infinity norm \shortcite{fisac2019bridging}.
This ensures that a fixed point iteration converges to the unique solution of equation \eqref{eq: discounted self-consistency condition}, which is a discounted version of the value function.
As the discount factor $\gamma$ approaches one, the solution of \eqref{eq: discounted self-consistency condition} approaches that of \eqref{eq: self-consistency condition}, which is the original value function.

Suppose we use a feedforward neural network $V_\theta^\pi$ to represent the value function, where $\theta$ is the network parameters.
To approximate the solution of equation \eqref{eq: discounted self-consistency condition}, we minimize the mean squared error (MSE) between the two sides of the equation: 
\begin{equation}
\label{eq: risky self-consistency loss}
    L_\mathrm{RSC}(\theta)=\frac{1}{N}\sum_{i=1}^N(V_\theta^\pi(x^{(i)})-((1-\gamma)h(x^{(i)})+\gamma\max\{h(x^{(i)}),V_\theta^\pi(f_\pi(x^{(i)}))\}))^2,
\end{equation}
where the subscript ``RSC" stands for risky self-consistency condition, $N$ is the number of state samples, and $x^{(i)}$ stands for the $i$-th state sample.
In practice, the states are uniformly sampled from the state space. 
However, a problem with this method is that the obtained value network may not be a valid safety certificate.
Specifically, the zero-sublevel set of $V_\theta^\pi$, denoted as $\mathrm{X}_{V_\theta^\pi}$, may not strictly satisfy the necessary and sufficient conditions for a feasible region given in Theorem \ref{thm: feasible region conditions}.
The cause for this invalidity is two-fold: 1) approximation errors of $V_\theta^\pi$ make it not exactly the solution to \eqref{eq: discounted self-consistency condition}, and 2) when the discount factor $\gamma$ is less than one, the zero-sublevel set of the solution to the discounted self-consistency condition \eqref{eq: discounted self-consistency condition} is an over-approximation of that of the solution to the original self-consistency condition \eqref{eq: self-consistency condition} \shortcite{akametalu2023minimum}.
Since the zero-sublevel set of the solution to \eqref{eq: self-consistency condition} is already the maximum feasible region of $\pi$, any of its over-approximations is not a valid feasible region.
This invalidity may lead to unsafe behaviors when using $V_\theta^\pi$ for synthesizing control policies.
To address this issue, this paper aims to synthesize neural HJ reachability value functions with verified zero-sublevel sets.
\begin{problem}
\label{pro: synthesis}
Given a control system \eqref{eq: system dynamics} and a control policy $\pi$, synthesize a neural HJ reachability value function $V_\theta^\pi$, such that its zero-sublevel set is a verified feasible region of $\pi$, i.e., it strictly satisfies the conditions of constraint satisfaction
\begin{equation}
\label{eq: constraint satisfaction condition}
    \forall x\in\mathcal{X}, V_\theta^\pi(x)\le0 \Rightarrow h(x)\le0,
\end{equation}
and forward invariance
\begin{equation}
\label{eq: forward invariance condition}
    \forall x\in\mathcal{X}, V_\theta^\pi(x)\le0 \Rightarrow V_\theta^\pi(f_\pi(x))\le0.
\end{equation}
\end{problem}
Note that the constraint satisfaction and forward invariance conditions only ensure that the zero-sublevel set of $V_\theta^\pi$ is a feasible region but do not ensure that $V_\theta^\pi$ is an exact solution to \eqref{eq: self-consistency condition}.
In fact, they are only necessary conditions for $V_\theta^\pi$ to be an exact solution.
We choose to verify these two conditions because exact verification of \eqref{eq: self-consistency condition}, which is an equation involving a neural network, is almost impossible due to neural network approximation errors.
In addition, since any safety certificate represents a feasible region by its level set, these two conditions can also be used to verify other safety certificates, such as CBF and CLF, with minor modifications.

\subsection{Verifying Neural Network via Mixed Integer Linear Programming}
Verification of a neural network is to check whether the network's output lies in a specific output set for all inputs in a given input set \shortcite{liu2021algorithms}.
Mathematically, let $\mathcal{X}$ be the input set, $\mathcal{Y}$ be the output set, and $\mathrm{NN}(\cdot)$ be the neural network.
A verification problem requires to check whether the following assertion holds:
\begin{equation}
\label{eq: verification problem}
    \forall x\in\mathcal{X}, y=\mathrm{NN}(x)\in\mathcal{Y}.
\end{equation}
In this paper, we consider the case where $\mathrm{NN}(\cdot)$ is a feedforward neural network with ReLU activation functions.
In this case, $\mathrm{NN}(\cdot)$ is a piecewise linear function, and the equality $y=\mathrm{NN}(x)$ can be encoded as a set of linear and integer constraints \shortcite{tjeng2017evaluating}.
Moreover, if the input set $\mathcal{X}$ and the complement of the output set $\mathcal{Y}$ can be expressed by a finite number of linear constraints, e.g., $\mathcal{X}$ and the complement of $\mathcal{Y}$ are polytopes, assertion \eqref{eq: verification problem} can be checked by solving a mixed integer linear programming (MILP):
\begin{equation}
\label{eq: MILP}
    \mathrm{find} \ x, \quad \mathrm{s.t.} \ x\in\mathcal{X}, y\notin\mathcal{Y}, y=\mathrm{NN}(x).
\end{equation}
Problem \eqref{eq: MILP} tries to find a counterexample in the input set such that the corresponding output of the neural network is not in the output set.
If the problem is feasible, the property to verify is violated and a counterexample is found.
If the problem is infeasible, the property holds.
Note that problem \eqref{eq: MILP} is a feasibility problem, i.e., an optimization problem without an objective function.
It is also possible to include an objective function in \eqref{eq: MILP}, with common examples like maximum violation and minimum disturbance \shortcite{liu2021algorithms}.
In this paper, since we only focus on whether the property holds or not and have no preference for counterexamples, we omit the objective function to simplify the problem.

\section{Method}
This section formally introduces our neural HJ reachability value function synthesis framework.
We first provide an overview of our synthesis framework and then detail three key techniques that significantly improve its scalability.

\subsection{Overview}
\label{sec: overview}
Our framework consists of three stages: pre-training, adversarial training, and verification-guided training.
Pre-training learns a value network without verification, which probably violates the feasible region conditions.
Adversarial training efficiently searches for counterexamples and eliminates most of them by fine-tuning the network.
Verification-guided training finds remaining counterexamples by solving the MILP and further fine-tunes the network until the feasible region conditions are verified.
Pre-training is performed first and is separate from the other two stages.
Adversarial training is performed next, and if no counterexamples are found in a certain number of iterations, verification-guided training starts.
If a counterexample is found, we return to adversarial training until the next time verification-guided training is triggered.
This process is repeated until verification succeeds, at which point we have synthesized a valid neural value function.
Compared with most existing neural safety certificate synthesis methods that include pre-training and verification-guided training \shortcite{wang2023simultaneous,chang2019neural,abate2020formal}, our framework adds an adversarial training stage between them.
This is because a pre-trained value network usually has a large number of counterexamples, while only a single one can be found in each verification step, making verification-guided training inefficient.
In contrast, adversarial training searches counterexamples in a batched manner, making the search much less expensive than that in verification-guided training in terms of computation.
First using adversarial training to reduce counterexamples to a small number and then performing verification-guided training greatly improves fine-tuning efficiency.

In pre-training, we learn a value network by minimizing the following loss function.
\begin{equation}
\label{eq: pre-training loss}
    L_\mathrm{pre}(\theta)=L_\mathrm{RSC}(\theta)+L_\mathrm{APA}(\theta),
\end{equation}
where $L_\mathrm{APA}(\theta)$ is a regularization term computed by activation pattern alignment (APA), which reduces the number of linear segments of a neural network to accelerate verification.
This technique will be detailed in Section \ref{sec: APA}.
The pre-trained value network is not verified and may violate the feasible region conditions.
We fine-tune the network in the next two stages to make its zero-sublevel set a verified feasible region.

In adversarial training, we first search for counterexamples, i.e., states that violate the feasible region conditions.
For simplicity of narration, we call the counterexamples of the constraint satisfaction condition the \textit{constraint counterexamples} and those of the forward invariance condition the \textit{invariance counterexamples}. 
To find these two kinds of counterexamples, we solve two corresponding optimization problems.
The optimization problem for finding constraint counterexamples is
\begin{equation}
\label{eq: constraint CE search problem}
    \max_{x\in\mathcal{X}} \ h(x), \quad \mathrm{s.t.} \ V_\theta^\pi(x)\le0,
\end{equation}
and that for finding invariance counterexamples is
\begin{equation}
\label{eq: invariance CE search problem}
    \max_{x\in\mathcal{X}} \ V_\theta^\pi(f_\pi(x)), \quad \mathrm{s.t.} \ V_\theta^\pi(x)\le0.
\end{equation}
If the optimal value of \eqref{eq: constraint CE search problem} is greater than zero, the solution is a constraint counterexample.
Similarly, if the optimal value of \eqref{eq: invariance CE search problem} is greater than zero, the solution is an invariance counterexample.
Generally, problem \eqref{eq: constraint CE search problem} and \eqref{eq: invariance CE search problem} are non-convex in both their objective functions and constraints and are thus difficult to solve.
However, to find counterexamples, we do not need to solve them exactly but only need to find feasible points with positive objective functions, i.e., $h(x)>0$ and $V_\theta^\pi(x)\le0$ for constraint counterexamples and $V_\theta^\pi(f_\pi(x))>0$ and $V_\theta^\pi(x)\le0$ for invariance counterexamples.
To achieve this goal efficiently, we propose a gradient-based search method called boundary-guided backtracking (BGB), which will be detailed in Section \ref{sec: BGB}.

After obtaining counterexamples, we store them in a dataset for fine-tuning the value network.
In each iteration, we randomly sample the two kinds of counterexamples from the dataset and minimize their corresponding loss functions computed according to the feasible region conditions.
The loss function for constraint counterexamples is
\begin{equation}
\label{eq: constraint loss}
    L_\mathrm{con}(\theta)=\frac{1}{N_\mathrm{con}} \sum_{i=1}^{N_\mathrm{con}} -V_\theta^\pi(x_\mathrm{con}^{(i)}),
\end{equation}
and that for invariance counterexamples is
\begin{equation}
\label{eq: invariance loss}
    L_\mathrm{inv}(\theta)=\frac{1}{N_\mathrm{inv}} \sum_{i=1}^{N_\mathrm{inv}} V_\theta^\pi(f_\pi(x_\mathrm{inv}^{(i)}))-V_\theta^\pi(x_\mathrm{inv}^{(i)}),
\end{equation}
where $N_\mathrm{con}$ and $N_\mathrm{inv}$ are the numbers of constraint and invariance counterexamples respectively, and $x_\mathrm{con}^{(i)}$ and $x_\mathrm{inv}^{(i)}$ stands for the $i$-th constraint counterexample and invariance counterexample, respectively.
We discover in our experiments that directly minimizing these two loss functions will result in severe shrinkage of the value network's zero-sublevel set.
A possible reason for the shrinkage is that minimizing \eqref{eq: invariance loss} only decreases the difference between the values of the next state and the current state, but their respective values may increase instead.
To mitigate this problem, we include an additional regularization term in the loss function of adversarial training:
\begin{equation}
\label{eq: adversarial training loss}
    L_\mathrm{adv}(\theta)=L_\mathrm{con}(\theta)+L_\mathrm{inv}(\theta)+L_\mathrm{ESR}(\theta),
\end{equation}
where $L_\mathrm{ESR}(\theta)$ is computed using entering state regularization (ESR), which will be detailed in Section \ref{sec: ESR}.

In verification-guided training, we use mixed integer linear programming (MILP) to verify the feasible region conditions or find counterexamples.
If the conditions are verified, we obtain a valid neural value function, and the algorithm ends.
If counterexamples are found, we add them to the dataset for further fine-tuning.
The core problem is how to formulate the verification of the feasible region conditions as MILPs.
In a standard verification problem \eqref{eq: verification problem}, there is only one function $\mathrm{NN}(\cdot)$, while in our problem, verification of each condition involves two functions, and their corresponding inequalities have an implication relationship.
To deal with this problem, we first concatenate the two functions into a single function with two outputs:
\begin{equation}
\label{eq: models to verify}
    \mathrm{M_{con}}(x)=
    \begin{bmatrix}
        V_\theta^\pi(x) \\ h(x)
    \end{bmatrix}, \quad
    \mathrm{M_{inv}}(x)=
    \begin{bmatrix}
        V_\theta^\pi(x) \\ V_\theta^\pi(f_\pi(x))
    \end{bmatrix}.
\end{equation}
With this concatenation, the feasible region conditions can be naturally expressed by restricting the output of the concatenated function in the complement of the second quadrant in a two-dimensional space.
Take the constraint satisfaction condition as an example, ``$V_\theta^\pi(x)\le0 \Rightarrow h(x)\le0$" means the output of $\mathrm{M_{con}}$ should not be in the second quadrant.
Thus, we can define an output set whose complement is the second quadrant so that the output constraint ``$y\notin\mathcal{Y}$" can be expressed by linear inequality constraints.
Therefore, verification of the constraint satisfaction condition can be formulated as
\begin{equation}
\label{eq: constraint MILP}
    \mathrm{find} \ x, \quad \mathrm{s.t.} \ x\in\mathcal{X}, y=\mathrm{M_{con}}(x), y_1\le0, y_2\ge0,
\end{equation}
where $y_i$ stands for the $i$-th element of $y$.
Here, we assume that the state space $\mathcal{X}$ is a hyperrectangle, which is true in most cases.
Then, constraint $x\in\mathcal{X}$ in problem \eqref{eq: constraint MILP} can be expressed by linear inequalities.
Now, as long as $\mathrm{M_{con}}$ is piecewise linear, problem \eqref{eq: constraint MILP} is an MILP.
Since $V_\theta^\pi$ is a piecewise linear neural network, $\mathrm{M_{con}}$ is piecewise linear when the constraint function $h$ is piecewise linear.
Here, we assume that $h$ is piecewise linear, and the nonlinear case will be left for future work\footnote{One possible solution is to approximate a nonlinear $h$ using piecewise linear functions, e.g., Taylor model.}.
Verification of the forward invariance condition can be similarly formulated as
\begin{equation}
\label{eq: invariance MILP}
    \mathrm{find} \ x, \quad \mathrm{s.t.} \ x\in\mathcal{X}, y=\mathrm{M_{inv}}(x), y_1\le0, y_2\ge0.
\end{equation}
If the dynamics $f$ and the policy $\pi$ are both piecewise linear functions, $\mathrm{M_{inv}}$ is piecewise linear.
Similar to $h$, we assume that $f$ and $\pi$ are piecewise linear, and the nonlinear case will be left for future work.
To show how problems \eqref{eq: constraint MILP} and \eqref{eq: invariance MILP} can be encoded as MILPs, we explicitly write them in a unified standard form of MILP as follows.
Here, we view $\mathrm{M_{con}}$ and $\mathrm{M_{inv}}$ as a single feedforward neural network with ReLU activation functions.
\begin{subequations}
\begin{alignat}{1}
    \mathrm{find} \quad & x \\
    \mathrm{s.t.} \quad & \mathcal{X}_l\le x\le\mathcal{X}_u, \label{eq: input constraint} \\
    & z_0=x, \\
    & \hat{z}_j=W_j z_{j-1} + b_j, \ j=1,\dots,l, \\
    & \mathrm{if} \ \ \hat{l}_{j,k}\ge0, \ z_{j,k}=\hat{z}_{j,k}, \ j=1,\dots,l-1, \ k=1,\dots,d_j, \label{eq: ReLU encoding starts} \\
    & \mathrm{if} \ \hat{u}_{j,k}\le0, \ z_{j,k}=0, \\
    & \mathrm{otherwise}, \ \ z_{j,k}\le\hat{z}_{j,k}, \\
    & \hspace{57.5pt} z_{j,k}\ge0, \\
    & \hspace{57.5pt} z_{j,k}\le\hat{z}_{j,k}-\hat{l}_{j,k}(1-\delta_{j,k}), \ \delta_{j,k}\in\{0,1\}, \label{eq: ReLU encoding delta} \\
    & \hspace{57.5pt} z_{j,k}\le\hat{u}_{j,k}\delta_{j,k}, \label{eq: ReLU encoding ends} \\
    & \hat{z}_l=y, \\
    & y_1\le0, \ y_2\ge0,
\end{alignat}
\end{subequations}
where $\mathcal{X}_l$ and $\mathcal{X}_u$ are lower and upper bounds of $\mathcal{X}$, the inequality signs in \eqref{eq: input constraint} represent element-wise comparisons, $\hat{z}_j$ and $z_j$ are the $j$-th layers of the neural network $\mathrm{M_{con}}$ or $\mathrm{M_{inv}}$ before and after activation, $\hat{z}_{j,k}$ and $z_{j,k}$ are the $k$-th elements in the $i$-th layers, $\hat{l}_{j,k}$ and $\hat{u}_{j,k}$ are the lower and upper pre-activation bounds, $\delta_{j,k}$ is the binary variable for activation status.
Equations \eqref{eq: ReLU encoding starts} to \eqref{eq: ReLU encoding ends} are encodings of ReLU, which follow the method proposed by \shortciteA{tjeng2017evaluating}.
The pre-activation bounds $\hat{l}_{j,k}$ and $\hat{u}_{j,k}$ can be computed by any reachability-based verification method \shortcite{liu2021algorithms}, and we use CROWN \shortcite{zhang2018efficient} in this paper.

Precisely speaking, the output constraint $y_2\ge0$ in \eqref{eq: constraint MILP} and \eqref{eq: invariance MILP} should be a strict one, i.e., $y_2>0$, according to the feasible region conditions.
However, this would make the two problems no longer standard MILPs and thus difficult to solve.
Here, we relax the strict constraint to a non-strict one to maintain MILP formulations at a slight expense of completeness: the solutions to \eqref{eq: constraint MILP} and \eqref{eq: invariance MILP} are not counterexamples when $y_1\le0$ and $y_2\ge0$ simultaneously hold with equality.
Fortunately, this problem is minor in practice because the situation where these two equalities simultaneously hold is rare: it can only happen on the boundary of a feasible region.
In most cases, such a situation can be avoided by slightly shrinking the feasible region through fine-tuning.
Moreover, when solving \eqref{eq: constraint MILP} and \eqref{eq: invariance MILP} with numerical optimizers, such as Gurobi, there are likely to be numerical errors in the inequality constraints, i.e., the constraints are violated by a small amount within a certain tolerance.
Therefore, even without relaxation, $y_2$ has to be strictly less than zero with some margin so that the optimizer can consider the constraint $y_2>0$ unsatisfiable and conclude that the problem is infeasible.

% TODO (done): 定义3个problem，对应三个部分，可以放在technique之前
In summary, our framework synthesizes a verified neural HJ reachability value function in three stages by solving four subproblems of problem \eqref{pro: synthesis}, which are defined as follows.

\begin{subproblem}[Pre-training]
\label{subpro: pre-training}
Train a value network $V_\theta^\pi$ by minimizing loss function \eqref{eq: pre-training loss}.
\end{subproblem}

\begin{subproblem}[Counterexample search]
\label{subpro: counterexample search}
Find constraint counterexamples and invariance counterexamples of a value network $V_\theta^\pi$ by solving problem \eqref{eq: constraint CE search problem} and \eqref{eq: invariance CE search problem}, respectively.
\end{subproblem}

\begin{subproblem}[Fine-tuning]
\label{subpro: fine-tuning}
Fine-tune a value network $V_\theta^\pi$ by minimizing loss function \eqref{eq: adversarial training loss} on counterexamples found by solving subproblems \ref{subpro: counterexample search} or \ref{subpro: verification}.
\end{subproblem}

\begin{subproblem}[Verification]
\label{subpro: verification}
Verify constraint satisfaction condition and forward invariance condition of a value network $V_\theta^\pi$ by solving problem \eqref{eq: constraint MILP} and \eqref{eq: invariance MILP}, respectively.
If both problems are infeasible, return the verified $V_\theta^\pi$.
If either problem is feasible, return the found counterexamples.
\end{subproblem}

\noindent
Subproblem \ref{subpro: pre-training} is first solved in the pre-training stage.
Then, subproblem \ref{subpro: counterexample search} and \ref{subpro: fine-tuning} are iteratively solved in the adversarial training stage until no counterexample is found from subproblem \ref{subpro: counterexample search}.
Next, subproblem \ref{subpro: verification} is solved in the verification guided-training stage.
When counterexamples are found from subproblem \ref{subpro: verification}, they will still be used to solve subproblem \ref{subpro: fine-tuning}, and the algorithm returns to the adversarial training stage.
This process is repeated until subproblem \ref{subpro: verification} returns a verified value function.

As mentioned above, we propose three techniques to address the three challenges of improving the scalability of neural value function synthesis and verification.
The three techniques correspond to solving subproblems \ref{subpro: pre-training}, \ref{subpro: counterexample search}, and \ref{subpro: fine-tuning}, respectively.
The following three subsections provide a detailed introduction to each technique.

\subsection{Activation Pattern Alignment}
\label{sec: APA}
A major difficulty in scaling our proposed framework is the high computational complexity of MILP-based verification.
Our experiment shows that verifying a relatively small value network in a four-dimensional system can take more than 2 hours on a common computing platform.
To alleviate this problem, we first analyze the reason for such a high computational complexity through the solving mechanism of MILPs.
For a neural network with ReLU activation functions, each ReLU unit can be either active or inactive.
To handle a neural network constraint, a binary variable is introduced for each ReLU unit to model its activation status, i.e., zero represents an inactive unit, and one represents an active unit, as shown in \eqref{eq: ReLU encoding delta} and \eqref{eq: ReLU encoding ends}.
With these binary variables, MILP problems are solved using a branch-and-bound algorithm.
The branching step divides the problem into smaller sub-problems by fixing the values of some binary variables.
The bounding step estimates the lower and upper bounds of the objective function in the sub-problems.
The computational complexity of the branch-and-bound algorithm is mainly determined by the number of branches to explore, which relies on the number of possible combinations of the binary variable values.
Since each binary variable corresponds to a ReLU unit, we can also say that the computational complexity relies on the number of possible activation patterns of the neural network.
Each activation pattern corresponds to a linear segment of the neural network.
These linear segments divide the input set into different regions, in each of which the neural network is a linear function.
Since the number of linear segments largely determines the computational complexity of solving MILPs, we aim to reduce it to accelerate verification.

For a neural network with a given structure, its number of linear segments can vary greatly depending on the network parameter values.
For a neural network with $N$ ReLU units, there is at least one linear segment and at most $2^N$ linear segments in a given input set.
To reduce the number of linear segments, we introduce a regularization method called activation pattern alignment (APA) when solving subproblem \ref{subpro: pre-training} in pre-training.
Suppose we are updating the value network on a batch of states $\{x^{(i)}\}_{i=1}^N$.
APA first adds a Gaussian noise to each state and obtains a disturbed counterpart of the state:
\begin{equation}
    \tilde{x}^{(i)}=x^{(i)}+\xi^{(i)},
\end{equation}
where $\xi^{(i)}\sim\mathcal{N}(0,\sigma^2)$.
Then, APA computes the following regularization term and adds it to the pre-training loss function \eqref{eq: pre-training loss}.
\begin{equation}
    L_\mathrm{APA}(\theta)=\alpha_\mathrm{APA}\frac{1}{N}\sum_{i=1}^N\sum_{j=1}^{l-1}\sum_{k=1}^{d_j}\frac{\min\{f^j_k(x^{(i)})\cdot f^j_k(\tilde{x}^{(i)}),0\}}{\min\{\mathrm{dropgrad}(f^j_k(x^{(i)})\cdot f^j_k(\tilde{x}^{(i)})),-\epsilon\}},
\end{equation}
where $\alpha_\mathrm{APA}>0$ is a coefficient, $l$ is the number of network layers, $d_j$ is the number of neurons in the $j$-th layer, $f^j_k$ is the value of the $k$-th neuron in the $j$-th layer before activation, and $\epsilon$ is a small constant for numerical stability.
This regularization term encourages states that are close to each other to have similar activation patterns.
It takes effect when the activation patterns of $x^{(i)}$ and its disturbed counterpart $\tilde{x}^{(i)}$ are different.
In this case, the multiplication of their pre-activation values in the numerator is penalized and driven towards a positive value so that their activation patterns become the same.
The last layer of the neural network is excluded when computing $L_\mathrm{APA}$ because it does not have a ReLU activation function and thus does not influence the number of linear segments.
The denominator of $L_\mathrm{APA}$ does not have a gradient with respect to network parameters $\theta$ and only serves as a normalization term.

Reducing the number of linear segments essentially reduces the nonlinearity of a neural network.
From this perspective, other neural network regularization methods may also achieve this goal.
A widely used regularization method is weight decay, which incorporates an L2 regularization on the network parameters into the optimization process.
Another method is the signal-to-noise ratio (SNR) loss proposed by \shortciteA{weiimprove}, which is designed to reduce the variance and improve the stability of ReLU units to mitigate performance degradation in certified training.
Compared with these methods, APA most effectively reduces linear segments while retaining the network's approximation ability to the greatest extent, as shown in Section \ref{sec: neural network regularization}.
This is because APA only takes effect when activation patterns of neighboring states are different and does not affect the specific pre-activation values when they have the same sign.

\subsection{Boundary-Guided Backtracking}
\label{sec: BGB}
In subproblem \ref{subpro: counterexample search}, we solve two constrained optimization problems \eqref{eq: constraint CE search problem} and \eqref{eq: invariance CE search problem} to find counterexamples of the feasible region conditions.
Compared with a standard adversarial training problem \shortcite{goodfellow2014explaining}, these two problems not only have a boundary constraint on the optimization variable but also have a non-convex constraint given by the zero-sublevel set of the value network.
This zero-sublevel set constraint makes these two problems difficult to solve because most existing adversarial training methods, such as the fast gradient sign method \shortcite{goodfellow2014explaining} and projected gradient descent (PGD) method \shortcite{madry2017towards}, cannot directly handle such non-convex constraints.

A straightforward method for handling the zero-sublevel set constraint is to perform a backtracking line search in each PGD iteration, resulting in PGD with backtracking (PGD-B).
Specifically, we start the search from an initial state $x_{(0)}$ randomly sampled in the zero-sublevel set.
In each iteration, we perform a backtracking line search along the gradient of the objective function followed by a projection operation until the resulting state is in the zero-sublevel set:
\begin{equation}
\label{eq: backtracking}
    x_{(k+1)}=\Pi_\mathcal{X}(x_{(k)}+\eta^s g_{\mathrm{obj},(k)}),
\end{equation}
where $\Pi_\mathcal{X}$ is the projection operator on $\mathcal{X}$, $\eta\in(0,1)$ is a constant, $s\in\mathbb{N}$ is the smallest number such that $V_\theta^\pi(x_{(k+1)})\le0$, and $g_{\mathrm{obj},(k)}$ is the unit vector of the gradient of the objective function.
Take problem \eqref{eq: constraint CE search problem} as an example,
\begin{equation}
    g_{\mathrm{obj},(k)}=\frac{\nabla_x h(x)}{\Vert\nabla_x h(x)\Vert_2}\Big|_{x=x_{(k)}}.
\end{equation}
The problem with PGD-B is that the search becomes very inefficient when approaching the boundary of the zero-sublevel set.
This is because the gradient of the objective function becomes almost vertical to the boundary, i.e., in the same direction as the gradient of the constraint function.
This is obvious for problem \eqref{eq: invariance CE search problem} because the gradients of $V_\theta^\pi(f_\pi(x))$ and $V_\theta^\pi(x)$ are very similar as long as the time step of the system is not very large.
For problem \eqref{eq: constraint CE search problem}, this phenomenon occurs when the boundary of the zero-sublevel set overlaps or is very close to that of the constrained set.
When the gradient becomes vertical to the boundary, the backtracking line search will end up in very small step sizes or even stops to avoid stepping out of the zero-sublevel set, as shown in Figure \ref{fig: PGD-B}.
It seems to be a minor problem since the counterexamples we are searching for are also located near the boundary of the zero-sublevel set; otherwise, the state cannot leave the zero-sublevel set or violate the constraint in one step unless it is already very close to the boundary.
However, practically, PGD-B would get stuck somewhere not exactly at a counterexample because the initial state is randomly chosen and counterexamples are sparsely distributed.
In this case, even if close to a counterexample, PGD-B may never find it because PGD-B can only search towards the boundary but not along it.

\begin{figure}
    \centering
    \begin{subfigure}[b]{0.45\textwidth}
        \centering
        \includegraphics[width=\textwidth]{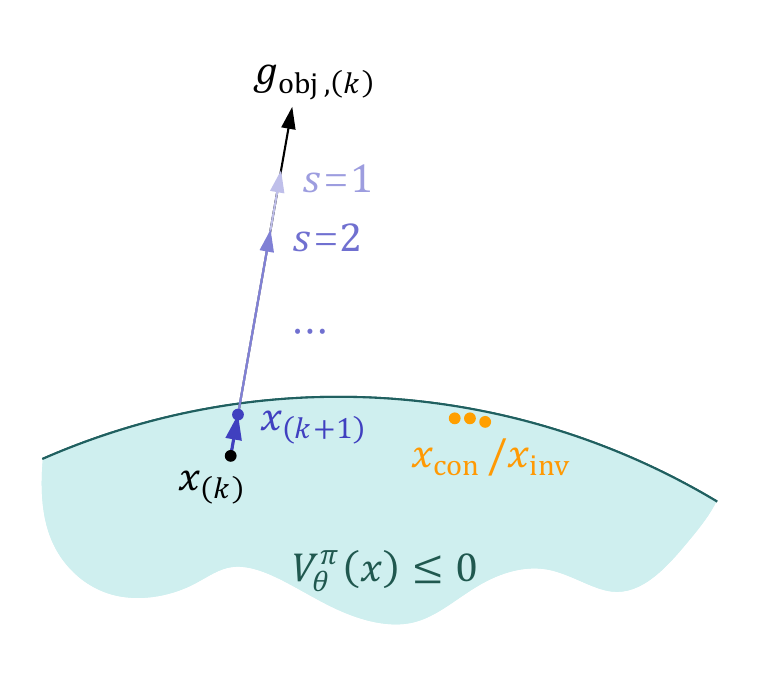}
        \caption{PGD-B}
        \label{fig: PGD-B}
    \end{subfigure}
    \begin{subfigure}[b]{0.45\textwidth}
        \centering
        \includegraphics[width=\textwidth]{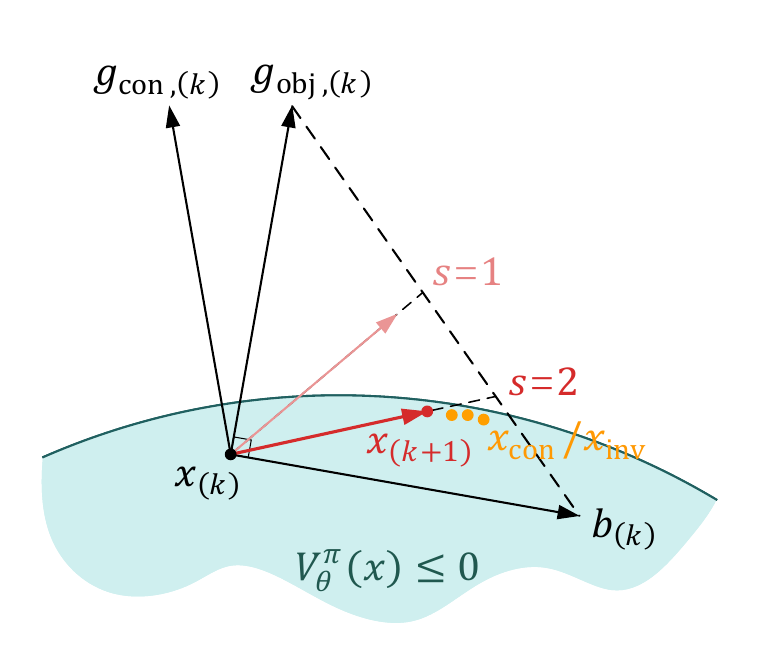}
        \caption{BGB}
        \label{fig: BGB}
    \end{subfigure}
    \caption{Search steps of PGD-B and BGB. The black dot represents the current position of the search, and the orange dots represent possible counterexamples. (a) PGD-B always searches along the gradient of the objective function and gets stuck near the boundary of the feasible region. (b) BGB rotates the search direction towards the tangent plane of the boundary and effectively search along the boundary, thus finding counterexamples more efficiently.}
\end{figure}

To solve the problem of PGD-B, we propose a boundary-guided backtracking (BGB) method that can efficiently search counterexamples along the boundary.
This is achieved by rotating the line search direction towards the tangent plane of the boundary when approaching it.
Specifically, instead of performing the line search along $g_{\mathrm{obj},(k)}$, BGB computes the search direction as a weighted sum of $g_{\mathrm{obj},(k)}$ and another unit vector $b_{(k)}$, as shown in Figure \ref{fig: BGB}.
$b_{(k)}$ is perpendicular to $g_{\mathrm{obj},(k)}$, coplanar with both $g_{\mathrm{obj},(k)}$ and $g_{\mathrm{con},(k)}$ and makes an obtuse angle with $g_{\mathrm{con},(k)}$, where $g_{\mathrm{con},(k)}$ is the unit vector of the gradient of the constraint function, i.e.,
\begin{equation}
    g_{\mathrm{con},(k)}=\frac{\nabla_x V_\theta^\pi(x)}{\Vert\nabla_x V_\theta^\pi(x)\Vert_2}\Big|_{x=x_{(k)}}.
\end{equation}
Then, $b_{(k)}$ is computed as
\begin{equation}
    b_{(k)}=\frac{\hat{b}_{(k)}}{\Vert\hat{b}_{(k)}\Vert_2}, \quad \hat{b}_{(k)}=(g_{\mathrm{obj},(k)}\cdot g_{\mathrm{con},(k)})g_{\mathrm{obj},(k)}-g_{\mathrm{con},(k)}.
\end{equation}
BGB line search is performed as
\begin{equation}
    x_{(k+1)}=\Pi_\mathcal{X}(x_{(k)}+\eta_l^s(\eta_a^s g_{\mathrm{obj},(k)}+(1-\eta_a^s)b_{(k)})),
\end{equation}
where $\eta_l$ and $\eta_a$ are backtracking coefficients for step size and search direction, respectively.
The key of BGB is the unit vector $b_{(k)}$, which determines the changing range of search direction.
We set $b_{(k)}$ vertical to $g_{\mathrm{obj},(k)}$ to ensure that the search direction always makes a sharp angle with $g_{\mathrm{obj},(k)}$ so that the objective function is always increased.
The reason for setting $b_{(k)}$ coplanar with both $g_{\mathrm{obj},(k)}$ and $g_{\mathrm{con},(k)}$ is that this is the quickest way to rotate the search direction from $g_{\mathrm{obj},(k)}$ to the tangent plane of the boundary.
Vector $b_{(k)}$ should make an obtuse angle with $g_{\mathrm{con},(k)}$ because this is the direction where the constraint function decreases.
Compared with a standard backtracking line search, BGB not only decreases the step size but also rotates the search direction towards the boundary in each iteration.
The rotation of the search direction prevents BGB from getting stuck near the boundary and enables it to search efficiently along the boundary.
This ability greatly improves the efficiency of finding counterexamples and accelerates adversarial training, as evident in Section \ref{sec: counterexample search}.

\subsection{Entering State Regularization}
\label{sec: ESR}

\begin{figure}
    \centering
    \begin{subfigure}[b]{0.49\textwidth}
        \centering
        \includegraphics[width=0.95\textwidth]{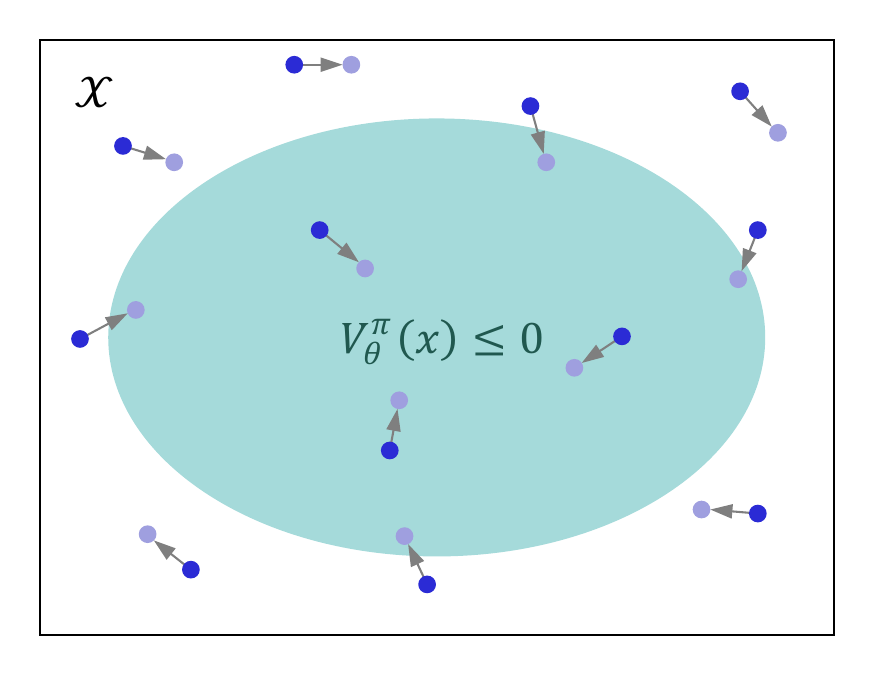}
        \caption{RSR}
        \label{fig: RSR}
    \end{subfigure}
    \begin{subfigure}[b]{0.49\textwidth}
        \centering
        \includegraphics[width=0.95\textwidth]{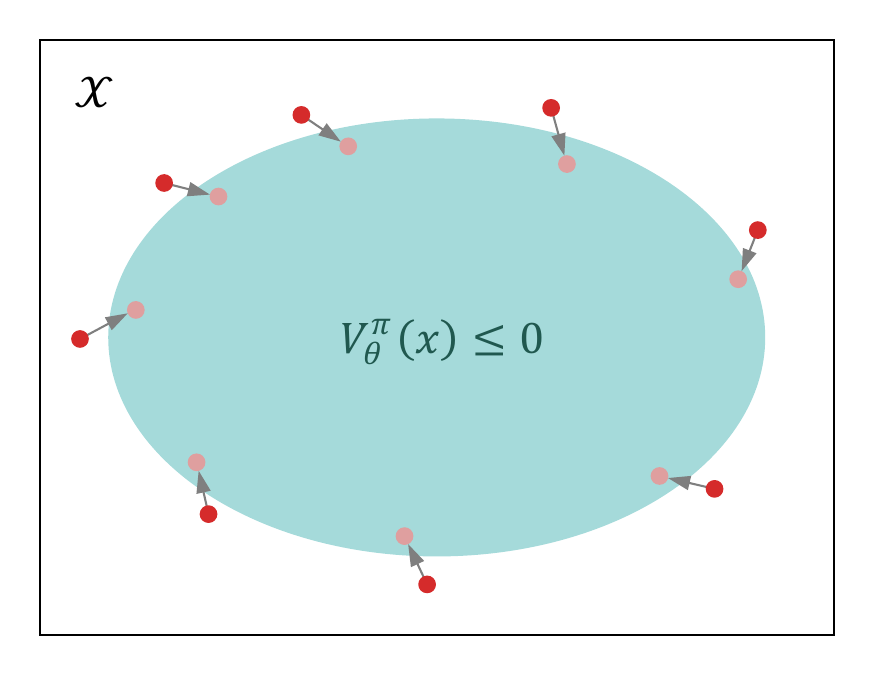}
        \caption{ESR}
        \label{fig: ESR}
    \end{subfigure}
    \caption{Regularized states of RSR and ESR. (a) The darker blue dots represent regularized states and the lighter ones are their next states. The regularized states are randomly sampled in the state space and may be infeasible. (b) The darker red dots represent regularized states and the lighter ones are their next states. The regularized states are entering states and must be feasible.}
\end{figure}

We discover in our experiments that when fine-tuning the value network in subproblem \ref{subpro: fine-tuning}, directly minimizing loss functions \eqref{eq: constraint loss} and \eqref{eq: invariance loss} on counterexamples will result in severe shrinkage of the zero-sublevel set.
Similar phenomena are also observed in other works on safety certificate synthesis \shortcite{chang2019neural,liu2023safe}.
These works deal with this problem by adding a regularization term that minimizes the output of the value network on randomly sampled states, as shown in Figure \ref{fig: RSR}.
This method is called random state regularization (RSR), and the mathematical formula of its regularization term is
\begin{equation}
    L_\mathrm{RSR}(\theta)=\alpha_\mathrm{RSR}\frac{1}{N_\mathrm{rnd}}\sum_{i=1}^{N_\mathrm{rnd}}V_\theta^\pi(x_\mathrm{rnd}^{(i)}),
\end{equation}
where $\alpha_\mathrm{RSR}$ is a coefficient, $x_\mathrm{rnd}^{(i)}$ is the $i$-th state uniformly sampled from the state space, and $N_\mathrm{rnd}$ is the number of sampled states.
The problem with RSR is that it may cause some infeasible states to be mistakenly included in the zero-sublevel set, resulting in violations of the feasible region conditions.
To avoid this problem, we only regularize states that do not compromise the satisfaction of feasible region conditions when included in the zero-sublevel set.
According to the definition of a feasible region, we can derive the following theorem, which provides a method for expanding a feasible region.

\begin{theorem}[Feasible region expansion]
\label{thm: feasible region expansion}
Let $\mathrm{X}^\pi$ be a feasible region of $\pi$. $\forall x\in\mathcal{X}\setminus\mathrm{X}^\pi$, if $h(x)\le0$ and $f_\pi(x)\in\mathrm{X}^\pi$, $\mathrm{X}^\pi\cup\{x\}$ is also a feasible region of $\pi$.
\end{theorem}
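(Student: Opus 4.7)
The plan is to prove Theorem \ref{thm: feasible region expansion} by a direct appeal to the necessary and sufficient characterization of a feasible region provided by Theorem \ref{thm: feasible region conditions}. That is, rather than reasoning about infinite-horizon trajectories through Definition \ref{def: feasible region} (which would force an induction on trajectories emanating from $x$), I would verify the two one-step conditions, constraint satisfaction and forward invariance, for the augmented set $\mathrm{X}^\pi\cup\{x\}$. This reduces the claim to a short case analysis on whether a generic element of $\mathrm{X}^\pi\cup\{x\}$ lies in the original feasible region or is the single new point $x$.

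First I would check constraint satisfaction. For any $y\in\mathrm{X}^\pi\cup\{x\}$, either $y\in\mathrm{X}^\pi$, in which case $h(y)\le0$ follows from the first condition of Theorem \ref{thm: feasible region conditions} applied to the hypothesis that $\mathrm{X}^\pi$ is a feasible region, or $y=x$, in which case $h(y)=h(x)\le0$ by assumption. Hence condition 1 of Theorem \ref{thm: feasible region conditions} holds for $\mathrm{X}^\pi\cup\{x\}$.

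Next I would check forward invariance. For any $y\in\mathrm{X}^\pi\cup\{x\}$, either $y\in\mathrm{X}^\pi$, in which case $f_\pi(y)\in\mathrm{X}^\pi\subseteq\mathrm{X}^\pi\cup\{x\}$ by the second condition of Theorem \ref{thm: feasible region conditions} applied to $\mathrm{X}^\pi$, or $y=x$, in which case $f_\pi(y)=f_\pi(x)\in\mathrm{X}^\pi\subseteq\mathrm{X}^\pi\cup\{x\}$ by the remaining assumption of the theorem. Thus condition 2 of Theorem \ref{thm: feasible region conditions} also holds, and invoking the ``if'' direction of that theorem yields that $\mathrm{X}^\pi\cup\{x\}$ is a feasible region of $\pi$.

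There is no real obstacle here beyond bookkeeping: the result is essentially a structural corollary of Theorem \ref{thm: feasible region conditions}, which has already done the heavy lifting of replacing an infinite-horizon definition with a one-step certificate. The only subtlety worth flagging is that the hypothesis $f_\pi(x)\in\mathrm{X}^\pi$ (not just $\in\mathrm{X}^\pi\cup\{x\}$) is precisely what prevents the need to re-examine trajectories passing through $x$ more than once, so the proof remains purely algebraic on sets rather than inductive on time.
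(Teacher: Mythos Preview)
Your proposal is correct and is exactly the natural route: the paper does not spell out a proof for Theorem~\ref{thm: feasible region expansion} beyond remarking that it follows from the definition of a feasible region, and your verification of the two one-step conditions of Theorem~\ref{thm: feasible region conditions} via a case split on $y\in\mathrm{X}^\pi$ versus $y=x$ is precisely how that derivation goes. There is nothing to add or fix.
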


Theorem \ref{thm: feasible region expansion} tells us that if a state outside the feasible region satisfies the state constraint and enters the feasible region in one step, it can be included in the feasible region.
Such states are called \textit{entering states}.
The core idea of our regularization method, called entering state regularization (ESR), is to include entering states into the zero-sublevel set of the value network, as shown in Figure \ref{fig: ESR}.
In model predictive control, there is a concept similar to entering state called precursor set, which is defined as the set of all states whose next state is in the current set \shortcite{borrelli2017predictive}.
The precursor set is used for computing the maximal control invariant set: start from the whole state space as the initial set and iteratively intersect the current set with its precursor set, the resulting set gradually shrinks and converges to the maximal control invariant set.
Our entering states are those in the precursor set but not in the current set, and they are used for enlarging a feasible region instead of shrinking it.
To perform ESR, we first randomly sample some states in the state space in each iteration and then filter out entering states that satisfy:
\begin{equation}
\label{eq: entering state}
    h(x)\le-\delta, \ V_\theta^\pi(x)>0, \ V_\theta^\pi(f_\pi(x))\le-\delta,
\end{equation}
where $\delta$ is a small positive constant.
The purpose of introducing this constant is to avoid the undesirable influence of the regularization on nearby states.
Specifically, due to the generalization ability of neural networks, when including an entering state into the zero-sublevel set, some of its nearby states may also be included.
These nearby states may not be entering states and may cause violation of the feasible region conditions.
By introducing $\delta$, we set a margin of entering states to the boundary of the constrained set and the zero-sublevel set, thus decreasing the probability of mistaken inclusion.
Using the filtered entering states, we compute the following regularization term and add it to the value loss function.
\begin{equation}
\label{eq: ESR}
    L_\mathrm{ESR}(\theta)=\alpha_\mathrm{ESR}\frac{1}{N_\mathrm{ent}}\sum_{i=1}^{N_\mathrm{ent}}V_\theta^\pi(x_\mathrm{ent}^{(i)}),
\end{equation}
where $\alpha_\mathrm{ESR}$ is a coefficient, $N_\mathrm{ent}$ is the number of entering states, and $x_\mathrm{ent}^{(i)}$ stands for the $i$-th entering state.
Minimizing \eqref{eq: ESR} will decrease the values of entering states until they become negative, in which case they are included in the zero-sublevel set and will no longer be identified as entering states.
It is worth mentioning that the states filtered out by \eqref{eq: entering state} are not necessarily feasible because the value network has not been verified yet.
For example, it is possible that the next state $f_\pi(x)$, which is currently in the zero-sublevel set, is excluded from the set in later iterations, making the current state $x$ also infeasible.
Therefore, this regularization method may also cause mistaken inclusion of infeasible states.
However, our method is based on the fact that the value network is pre-trained, which ensures that the zero-sublevel set does not deviate much from the feasible region.
This greatly decreases the probability of including infeasible states.
Moreover, since the zero-sublevel set usually shrinks during fine-tuning and soon becomes smaller than the feasible region, the filtered entering states are feasible in most cases.
Compared with ESR, existing regularization methods are more harmful to the feasible region conditions because they use randomly sampled states for regularization, which are more likely to include infeasible states.

% TODO (done): 意义：V可以用来做safety filter；是pi的safety certificate，但不一定largest
\subsection{Analysis of Proposed Framework}
In this subsection, we analyze some important properties of our proposed value function synthesis framework.
First, we study the soundness and completeness of the verification in our framework.
In our problem, soundness means that when MILP \eqref{eq: constraint MILP} is infeasible, the constraint satisfaction condition \eqref{eq: constraint satisfaction condition} actually holds, and when MILP \eqref{eq: invariance MILP} is infeasible, the forward invariance condition \eqref{eq: forward invariance condition} actually holds.
Completeness means that when either of the two problems is feasible, its corresponding condition is actually violated.
According to the derivation of MILPs \eqref{eq: constraint MILP} and \eqref{eq: invariance MILP}, except for the strictness of their output constraints, their infeasibilities are equivalent to the satisfaction of conditions \eqref{eq: constraint satisfaction condition} and \eqref{eq: forward invariance condition}, respectively.
Therefore, our method is sound and complete, but its implementation is subject to floating point error induced incompleteness, as discussed in Section \ref{sec: overview}.

Next, we study the relationship between the zero-sublevel set of a verified $V_\theta^\pi$, denoted as $\mathrm{X}_{V_\theta^\pi}$, and the feasible region of $\pi$.
Since a verified $V_\theta^\pi$ satisfies conditions \eqref{eq: constraint satisfaction condition} and \eqref{eq: forward invariance condition}, which are necessary and sufficient conditions for a feasible region, $\mathrm{X}_{V_\theta^\pi}$ is a feasible region of $\pi$.
Since the feasible region satisfying conditions \eqref{eq: constraint satisfaction condition} and \eqref{eq: forward invariance condition} is not unique, $\mathrm{X}_{V_\theta^\pi}$ may not be the maximum feasible region of $\pi$ but only an under-approximation of it.
Nevertheless, being a feasible region of $\pi$, $\mathrm{X}_{V_\theta^\pi}$ already guarantees that trajectories starting from inside it and sampled under $\pi$ are safe in the long term.
This enables us to use $V_\theta^\pi$ and $\pi$ to construct a safety filter in which $V_\theta^\pi$ is a safety monitor and $\pi$ is a backup policy~\shortcite{hsu2023safety}.
Specifically, starting from inside $\mathrm{X}_{V_\theta^\pi}$, the safety filter checks in each step whether the next state after applying some nominal action is still in $\mathrm{X}_{V_\theta^\pi}$.
If this is true, the nominal action is applied.
Otherwise, the nominal action is replaced by the action computed by $\pi$, which ensures by forward invariance condition that the next state is still in $\mathrm{X}_{V_\theta^\pi}$.
This safety filter can ensure strict long-term constraint satisfaction for an arbitrary policy.

Finally, we discuss the stability of our training loop, i.e., whether it is guaranteed to find a verified value function and under what conditions will it fail.
Although our framework successfully synthesizes verified value functions on various tasks in our experiments (See Section \ref{sec: experiment}), such success is not guaranteed.
In some cases, the training loop may not be able to find a verified value function with a non-trivial zero-sublevel set, even if such a true value function exists.
The reasons for the failure of synthesis are multiple, and we list four main potential failure modes as follows.
\begin{enumerate}
\item \textbf{Instability of adversarial training.}
There are always counterexamples found in adversarial training, and they cannot be eliminated.
As a result, verification cannot start before the maximum number of iterations is exceeded.
This may be because the counterexample search is too inefficient, or catastrophic forgetting happens in adversarial training.
\item \textbf{Instability of verification-guided training.}
No counterexamples can be found in adversarial training, but verification always fails.
This indicates that counterexamples exist and the search algorithm is not effective enough to find them.
\item \textbf{Inefficient verification.}
Verification takes so long that the time limit is exceeded.
This is because the computational complexity of solving the MILPs is too high.
\item \textbf{Converging to invalid local optima.}
The value function becomes all positive in the state space, and its zero-sublevel set becomes empty.
This is because too many feasible states are mistakenly excluded from the zero-sublevel set during fine-tuning, causing the set to shrink so much that it disappears.
\end{enumerate}
In essence, these failure modes all stem from the three challenges to improve scalability mentioned in Section \ref{sec: intro}.
While our proposed three techniques address these challenges to some extent, they persist as the problem's dimension increases.

\section{Experiments}
\label{sec: experiment}
Through experiments, we aim to answer the following questions:
1) Can our proposed framework synthesize verified neural HJ reachability value functions on different safe control tasks?
2) Can APA accelerate verification, and how does it perform compared with other neural network regularization methods?
3) Can BGB accelerate counterexample search, and how does it perform compared with other search methods?
4) Can ESR enlarge feasible region, and how does it perform compared with other feasible region regularization methods?
We answer the first question by testing our framework on nine commonly used safe control tasks.
We answer the remaining three questions by comparing our proposed three techniques with several existing methods.
Before that, we first introduce a neural safety certificate synthesis and verification benchmark and some implementation details of our framework.

\subsection{Cersyve-9 Benchmark}
There have been several benchmarks for neural network verification \shortcite{brix2023first}, including image classification datasets such as MNIST \shortcite{lecun2010mnist} and CIFAR \shortcite{krizhevsky2009learning}, vehicle collision prediction problem \shortcite{ehlers2017formal}, and aircraft collision avoidance system ACAS Xu \shortcite{owen2019acas}.
However, these benchmarks are incompatible with the verification of neural safety certificates because of the underlying differences between their problems and safe control tasks.
Specifically, safe control tasks require neural networks to satisfy certain properties everywhere in the state space, i.e., the input set of the verification problem is the entire state space, while existing benchmarks only consider verification in either a small disturbance set around data samples \shortcite{lecun2010mnist,krizhevsky2009learning,ehlers2017formal} or part of the state space \shortcite{owen2019acas}.
In addition, verifying neural safety certificates involves system dynamics and control policies, which requires certain conversions like \eqref{eq: constraint MILP} and \eqref{eq: invariance MILP} before it can be formulated as standard verification problems while existing benchmarks only consider verification of a single neural network.

To bridge this gap, we provide a benchmark called Cersyve-9 for neural safety \underline{Cer}tificate \underline{sy}nthesis and \underline{ve}rification in safe control tasks.
Cersyve-9 contains:
(1) Nine commonly used control tasks with state dimensions ranging from two to six, as shown in Figure \ref{fig: benchmark tasks} and Table~\ref{tab: benchmark tasks}.
These tasks include both linear and nonlinear dynamics and safety constraints.
(2) A set of neural safety certificate synthesis tools, including pre-training, adversarial training, and verification-guided training modules, as well as evaluation tools for synthesized certificates.
These tools facilitate secondary development and performance comparison of different synthesis algorithms.
(3) An MILP-based neural safety certificate verification algorithm, as well as neural value functions synthesized and verified by our framework on all nine tasks for comparing different verification algorithms.

\begin{figure}
    \centering
    \begin{subfigure}[b]{0.29\textwidth}
        \centering
        \includegraphics[width=0.9\textwidth]{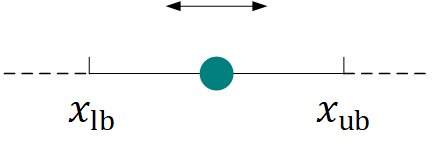}
        \caption{Double Integrator}
    \end{subfigure}
    \begin{subfigure}[b]{0.29\textwidth}
        \centering
        \includegraphics[width=0.6\textwidth]{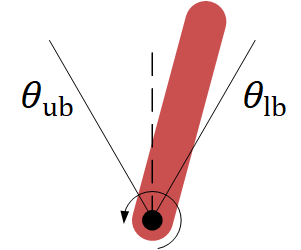}
        \caption{Pendulum}
    \end{subfigure}
    \begin{subfigure}[b]{0.29\textwidth}
        \centering
        \includegraphics[width=\textwidth]{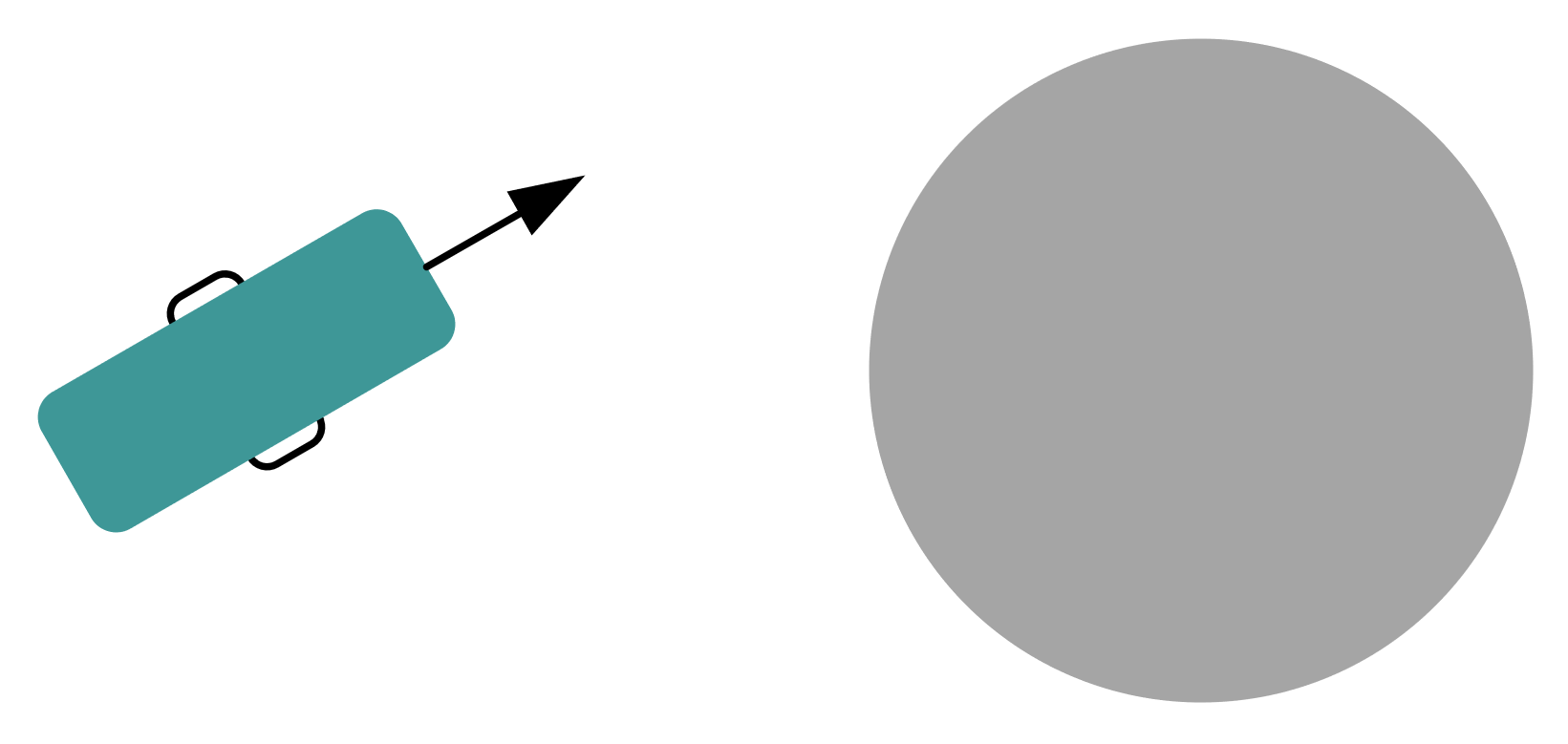}
        \caption{Unicycle}
    \end{subfigure}
    \par\bigskip
    \begin{subfigure}[b]{0.29\textwidth}
        \centering
        \includegraphics[width=0.9\textwidth]{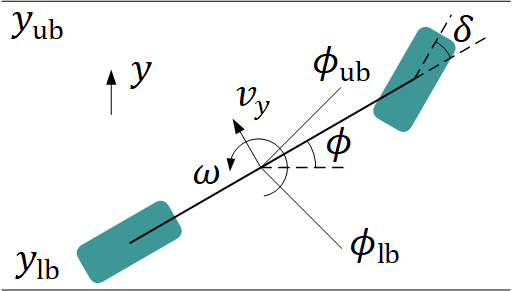}
        \caption{Lane Keep}
    \end{subfigure}
    \begin{subfigure}[b]{0.29\textwidth}
        \centering
        \includegraphics[width=0.8\textwidth]{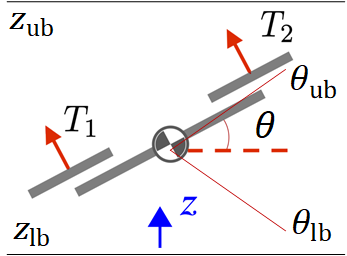}
        \caption{Quadrotor}
    \end{subfigure}
    \begin{subfigure}[b]{0.29\textwidth}
        \centering
        \includegraphics[width=0.9\textwidth]{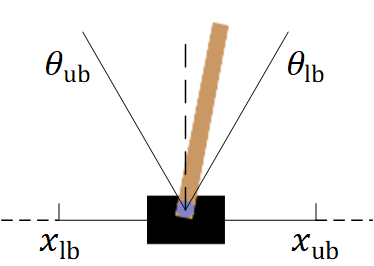}
        \caption{Cart Pole}
    \end{subfigure}
    \par\bigskip
    \begin{subfigure}[b]{0.29\textwidth}
        \centering
        \includegraphics[width=0.8\textwidth]{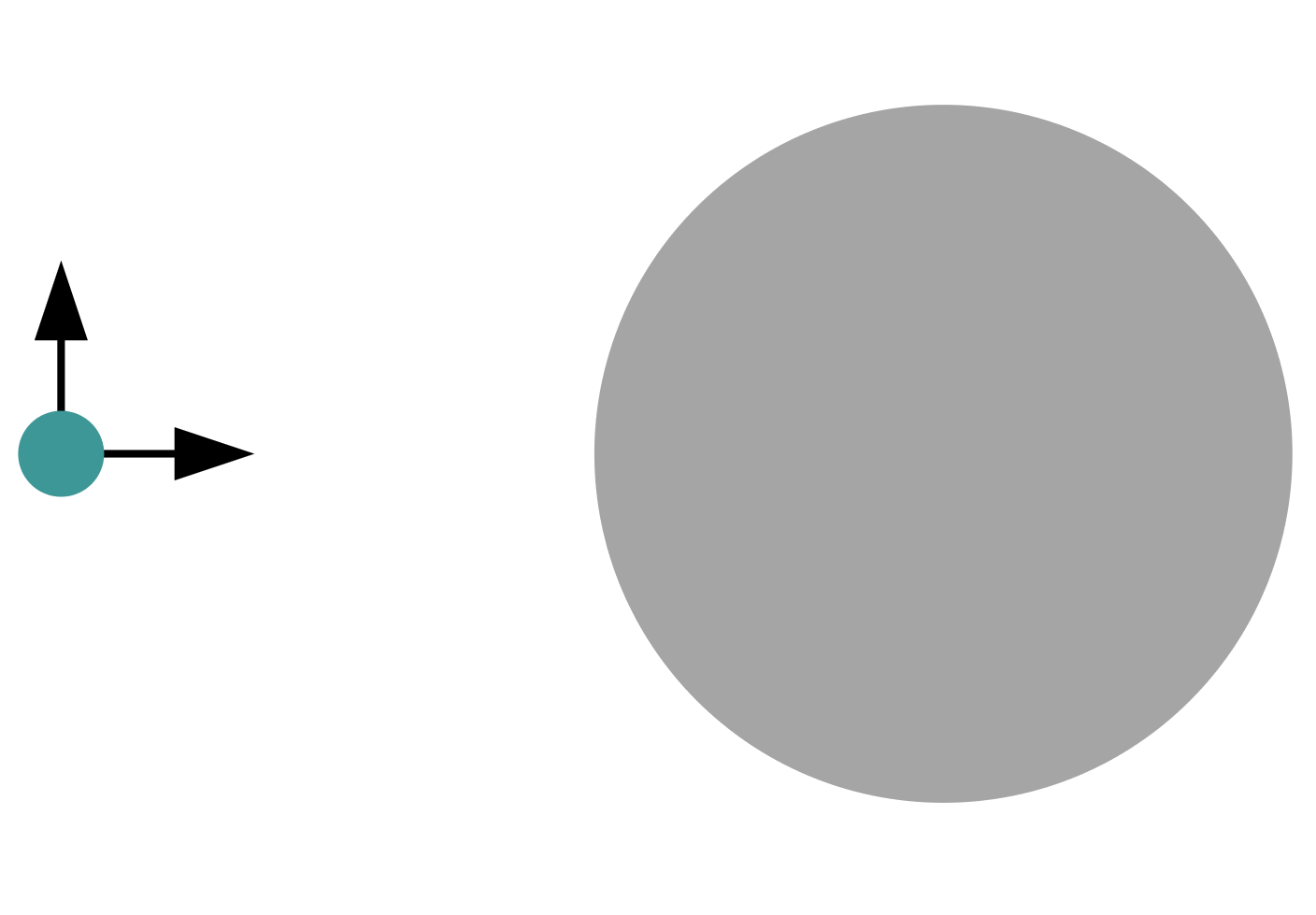}
        \caption{Point Mass}
    \end{subfigure}
    \begin{subfigure}[b]{0.29\textwidth}
        \centering
        \includegraphics[width=0.9\textwidth]{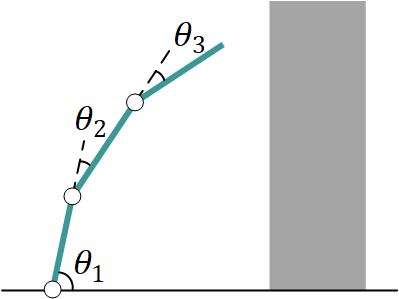}
        \caption{Robot Arm}
    \end{subfigure}
    \begin{subfigure}[b]{0.29\textwidth}
        \centering
        \includegraphics[width=\textwidth]{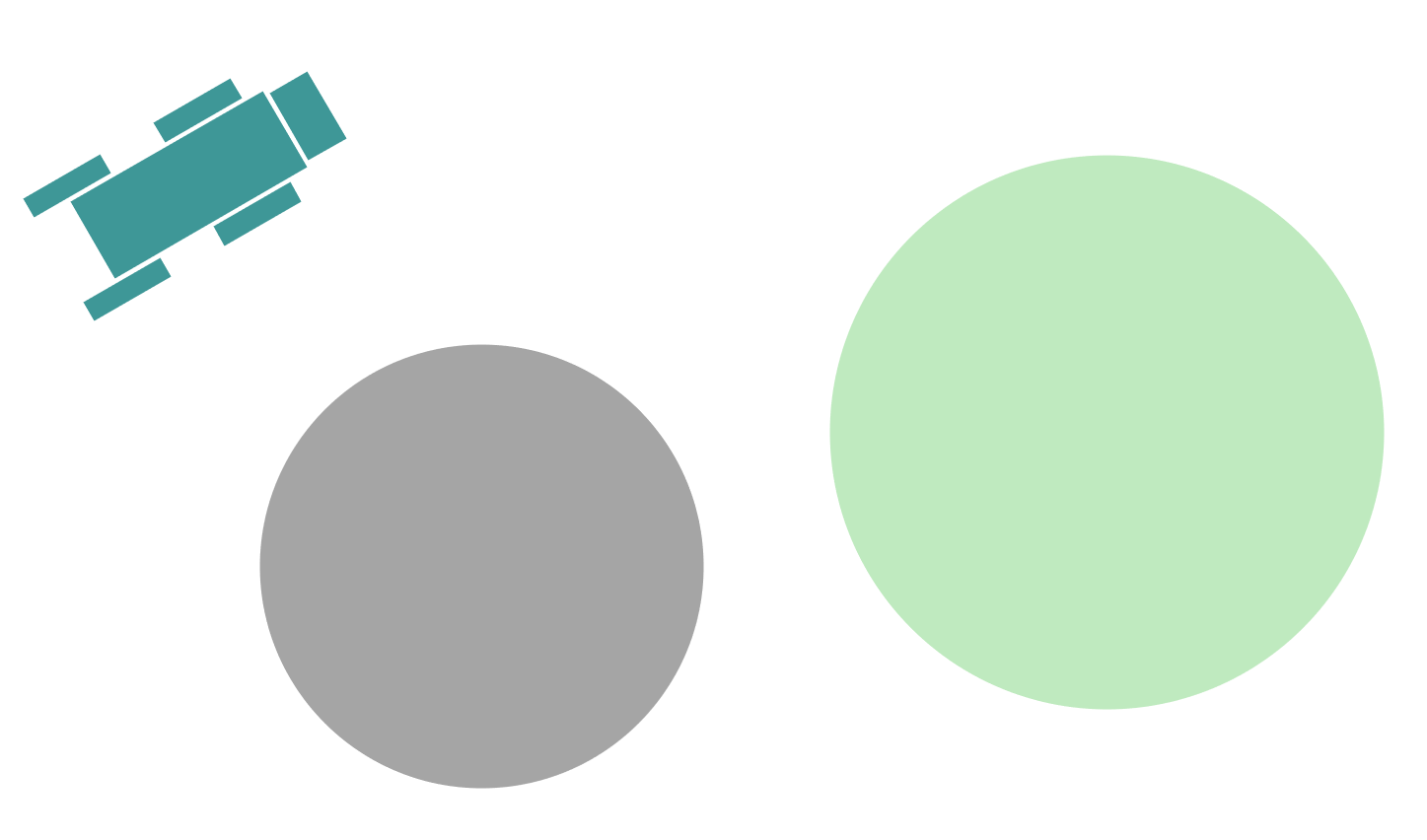}
        \caption{Robot Dog}
    \end{subfigure}
    \caption{Snapshots of safe control tasks in Cersyve-9. The grey objects in Unicycle, Point Mass, Robot Arm, and Robot Dog are obstacles. The green circle in Robot Dog is the goal.}
    \label{fig: benchmark tasks}
\end{figure}

\begin{table}[]
    \centering
    \begin{tabular}{ccccc}
    \toprule
        Task & State dim & Control dim & Dynamics & Constraint \\
    \midrule
        Double Integrator & 2 & 1 & Linear & Linear \\
        Pendulum & 2 & 1 & Nonlinear & Linear \\
        Unicycle & 3 & 2 & Nonlinear & Nonlinear \\ 
        Lane Keep & 4 & 1 & Linear & Linear \\
        Quadrotor & 4 & 2 & Nonlinear & Linear \\
        Cart Pole & 4 & 1 & Nonlinear & Linear \\
        Point Mass & 4 & 2 & Nonlinear & Nonlinear \\
        Robot Arm & 6 & 3 & Linear & Nonlinear \\
        Robot Dog & 5 & / & Nonlinear & Nonlinear \\
    \bottomrule
    \end{tabular}
    \caption{Information of safe control tasks in Cersyve-9. In Robot Dog, we directly fit the closed-loop dynamic model with a neural network, and thus, the control dimension is irrelevant to value network synthesis and verification.}
    \label{tab: benchmark tasks}
\end{table}

\subsubsection{Task descriptions}
Each task in Cersyve-9 defines state space, control input space, dynamic model, and safety constraints, which are detailed as follows.

\noindent\textbf{Double Integrator} requires stabilizing a second-order linear system to the origin under boundary constraints on the position.
The state of this task is $x=[p,\dot p]^\top\in\mathbb{R}^2$, where $p$ is the position.
The control input is $u\in\mathbb{R}$.
The state space and control input space are hyperrectangles specified by $[x_\mathrm{min},x_\mathrm{max}]$ and $[u_\mathrm{min},u_\mathrm{max}]$, respectively.
All following tasks adopt the setting of hyperrectangular state and control input spaces.
The dynamic model of this task is
\begin{equation}
\begin{aligned}
    p_{t+1}&=p_t+\dot p_t\Delta t, \\
    \dot p_{t+1}&=\dot p_t+u_t\Delta t,
\end{aligned}
\end{equation}
where $\Delta t$ is the time step.
The safety constraint is $p_\mathrm{lb}\le p\le p_\mathrm{ub}$, where $p_\mathrm{lb}$ and $p_\mathrm{ub}$ are lower and upper bounds on the position, respectively.

\noindent\textbf{Pendulum} requires stabilizing a pendulum to the upright position under boundary constraints on its angle \shortcite{brockman2016openai}.
The state of this task is $x=[\theta,\dot\theta]^\top\in\mathbb{R}^2$, where $\theta$ is the angle of the pendulum.
The control input $u\in\mathbb{R}$ is the torque applied to the pendulum.
The dynamic model is
\begin{equation}
\begin{aligned}
    \theta_{t+1}&=\theta_t+\dot\theta_{t+1}\Delta t, \\
    \dot\theta_{t+1}&=\dot\theta_t+\left(\frac{3g}{2l}\sin\theta_t+\frac{9u}{ml^2}\right)\Delta t,
\end{aligned}
\end{equation}
where $m$ is the mass of the pendulum, $l$ is its length, and $g$ is the gravitational acceleration.
The safety constraint is $\theta_\mathrm{lb}\le\theta\le\theta_\mathrm{ub}$.

\noindent\textbf{Unicycle} requires controlling a unicycle model to avoid collision with a circular obstacle.
The state of this task is $x=[v,x_\mathrm{o},y_\mathrm{o}]^\top\in\mathbb{R}^3$, where $v$ is the velocity angle of the unicycle, and $x_\mathrm{o},y_\mathrm{o}$ is the position of the obstacle in the unicycle frame.
The control input is $u=[a,\omega]^\top\in\mathbb{R}^2$, where $a$ is the acceleration of the unicycle, and $\omega$ is its angular velocity.
The dynamic model is
\begin{equation}
\begin{aligned}
    v_{t+1}&=v_t+a_t\Delta t, \\
    x_{\mathrm{o}(t+1)}&=(x_{\mathrm{o}t}-v_t\Delta t)\cos(\omega_t\Delta t)+y_{\mathrm{o}t}\sin(\omega_t\Delta t), \\
    y_{\mathrm{o}(t+1)}&=y_{\mathrm{o}t}\cos(\omega_t\Delta t)-(x_{\mathrm{o}t}-v_t\Delta t)\sin(\omega_t\Delta t).
\end{aligned}
\end{equation}
The safety constraint is $\sqrt{x_\mathrm{o}^2+y_\mathrm{o}^2}\ge r_\mathrm{o}$, where $r_\mathrm{o}$ is the radius of the obstacle.

\noindent\textbf{Lane Keep} requires keeping a 2DOF vehicle dynamics model in a straight line under boundary constraints on its lateral position and heading angle.
The state of this task is $x=[y,\phi,v_y,\omega]^\top\in\mathbb{R}^4$, where $y$ is the lateral position of the vehicle, $\phi$ is the heading angle, $v_y$ is the lateral velocity, and $\omega$ is the angular velocity.
The control input $u=\delta\in\mathbb{R}$ is the front wheel angle.
The dynamic model is
\begin{equation}
\begin{aligned}
    y_{t+1}&=y_t+(\phi_tv_x+v_y)\Delta t, \\
    \phi_{t+1}&=\phi_t+\omega_t\Delta t, \\
    v_{y(t+1)}&=\left(1+\frac{k_1+k_2}{mv_x}\Delta t\right)v_{yt}+\left(\frac{ak_1-bk_2}{mv_x}-v_x\right)\omega_t\Delta t, \\
    \omega_{t+1}&=\frac{ak_1-bk_2}{I_zv_x}v_{yt}\Delta t+\left(1+\frac{k_1a^2+k_2b^2}{I_zv_x}\Delta t\right)\omega_t,
\end{aligned}
\end{equation}
where $k_1$ and $k_2$ are the cornering stiffness of the front and rear wheels, respectively, $a$ and $b$ are the distance from the center of gravity to the front and rear axles, respectively, $m$ is the mass of the vehicle, $I_z$ is the moment of inertia on the vertical axis, and $v_x$ is the longitudinal velocity, which is a constant value.
The safety constraints are $y_\mathrm{lb}\le y\le y_\mathrm{ub}$ and $\phi_\mathrm{lb}\le\phi\le\phi_\mathrm{ub}$.

\noindent\textbf{Quadrotor} requires controlling a 2D quadrotor to a hover position under boundary constraints on its vertical position and roll angle.
The state of this task is $x=[z,\theta,\dot z,\dot\theta]^\top\in\mathbb{R}^4$, where $z$ is the vertical position of the quadrotor, and $\theta$ is the roll angle.
The control input $u=[T_1,T_2]^\top\in\mathbb{R}^2$ represents the thrust forces exerted by the rotors.
The dynamic model is
\begin{equation}
\begin{aligned}
    z_{t+1}&=z_t+\dot z_t\Delta t, \\
    \theta_{t+1}&=\theta_t+\dot\theta_t\Delta t, \\
    \dot z_{t+1}&=\dot z_t+\left(\frac{(T_{1t}+T_{2t})\cos\theta_t}{m}-g\right)\Delta t, \\
    \dot\theta_{t+1}&=\dot\theta_t+\frac{(T_{2t}-T_{1t})d}{I_y}\Delta t,
\end{aligned}
\end{equation}
where $m$ is the mass of the quadrotor, $d$ is the diameter, and $I_y$ is the moment of inertia.
The safety constraints are $z_\mathrm{lb}\le z\le z_\mathrm{ub}$ and $\theta_\mathrm{lb}\le\theta\le\theta_\mathrm{ub}$.

\noindent\textbf{Cart Pole} requires balancing a pole on a moving cart to the upright position under boundary constraints on the cart position and the pole angle \shortcite{brockman2016openai}.
The state of this task is $x=[y,\theta,\dot y,\dot\theta]^\top\in\mathbb{R}^4$, where $y$ is the cart position, and $\theta$ is the pole angle.
The control input $u=F\in\mathbb{R}$ is the force applied to the cart.
The dynamic model is
\begin{equation}
\begin{aligned}
    y_{t+1}&=y_t+\dot y_t\Delta t, \\
    \theta_{t+1}&=\theta_t+\dot\theta_t\Delta t, \\
    \dot y_{t+1}&=\dot y_t+\frac{F+ml\dot\theta_t^2\sin\theta_t-ml\ddot\theta_t\cos\theta_t}{M}\Delta t, \\
    \dot\theta_{t+1}&=\dot\theta_t+\ddot\theta_t\Delta t,
\end{aligned}
\end{equation}
where $m$ is the mass of the pole, $M$ is the total mass of the pole and the cart, $l$ is the length of the pole, and
$$
\ddot\theta_t=\frac{3Mg\sin\theta_t-3(F+ml\dot\theta_t^2\sin\theta_t)\cos\theta_t}{(4M-3m\cos^2\theta_t)l}.
$$
The safety constraints are $y_\mathrm{lb}\le y\le y_\mathrm{ub}$ and $\theta_\mathrm{lb}\le\theta\le\theta_\mathrm{ub}$.

\noindent\textbf{Point Mass} has the same setting as Unicycle except that the model is changed to a 2D point mass.
The state of this task is $x=[v_x,v_y,x_\mathrm{o},y_\mathrm{o}]^\top\in\mathbb{R}^4$, where $v_x$ and $v_y$ are the velocities on the $x$ and $y$ axes, respectively.
The control input is $u=[a,\omega]^\top\in\mathbb{R}^2$.
The dynamic model is
\begin{equation}
\begin{aligned}
    v_{x(t+1)}&=(v_{xt}+a_t\Delta t)\cos(\omega_t\Delta t)+v_{yt}\sin(\omega_t\Delta t), \\
    v_{y(t+1)}&=v_{yt}\cos(\omega_t\Delta t)-(v_{xt}+a_t\Delta t)\sin(\omega_t\Delta t), \\
    x_{\mathrm{o}(t+1)}&=(x_{\mathrm{o}t}-v_t\Delta t)\cos(\omega_t\Delta t)+y_{\mathrm{o}t}\sin(\omega_t\Delta t), \\
    y_{\mathrm{o}(t+1)}&=y_{\mathrm{o}t}\cos(\omega_t\Delta t)-(x_{\mathrm{o}t}-v_t\Delta t)\sin(\omega_t\Delta t).
\end{aligned}
\end{equation}
The safety constraint is $\sqrt{x_\mathrm{o}^2+y_\mathrm{o}^2}\ge r_\mathrm{o}$.

\noindent\textbf{Robot Arm} requires controlling a robot arm with three joints to a target position while avoiding collision with a wall in the front.
The state of this task is $x=[\theta_1,\theta_2,\theta_3, \dot\theta_1,\dot\theta_2,\dot\theta_3]^\top\in\mathbb{R}^6$, where $\theta_1$ is the angle of the first joint, and $\theta_2$ and $\theta_3$ are the incremental angles of the second and third joints relative to their previous joints.
The control input $u=[\ddot\theta_1,\ddot\theta_2,\ddot\theta_3]^\top\in\mathbb{R}^3$ represents the angular accelerations of the three joints.
The dynamic model is obtained through forward Euler discretization and omitted here.
The safety constraint is that the length of the robot arm's projection on the horizontal axis must not exceed a specific threshold:
\begin{equation}
    \sum_{i=1}^3\cos\left(\sum_{j=1}^i\theta_j\right)\le1.5.
\end{equation}

\noindent\textbf{Robot Dog} is a robot locomotion task designed by \shortciteA{he2024agile}, which requires controlling a robot dog to reach a goal while avoiding obstacles on its way.
The state of this task is $x=[v,x_\mathrm{g},y_\mathrm{g},x_\mathrm{o},y_\mathrm{o}]^\top\in\mathbb{R}^5$, where $v$ is the velocity of the robot dog, $x_\mathrm{g},y_\mathrm{g}$ is the position of the goal, and $x_\mathrm{o},y_\mathrm{o}$ is the position of the obstacle.
In this task, we directly fit a closed-loop dynamic model with a neural network on data collected by an RL policy, and thus, the control input is irrelevant to value function synthesis and verification.
The safety constraint is $\sqrt{x_\mathrm{o}^2+y_\mathrm{o}^2}\ge r_\mathrm{o}$.

\subsubsection{Implementation details}
For tasks with linear dynamic models and constraint functions, we directly use their analytical forms for value network synthesis.
For tasks with nonlinear dynamics or constraints, we fit the nonlinear dynamics or constraints with neural networks for synthesis.
We design linear control policies with control limits for all tasks except Robot Dog, where we use a neural network policy trained using the method proposed by \shortciteA{he2024agile}.
In Robot Dog, we directly fit the closed-loop dynamic model with a neural network, and thus, the control dimension is irrelevant to value network synthesis.
In other tasks with nonlinear dynamics, we fit the open-loop dynamic models with neural networks and substitute linear control policies to obtain closed-loop dynamics.
In this way, all functions involved in verification are piecewise linear, so the verification problems can be formulated as MILPs.
In theory, we could use the original nonlinear versions of the dynamics and constraints by calling some verifiers for nonlinear cyber-physical systems \shortcite{ivanov2019verisig,tran2020nnv}.
The reason why we approximate these functions with piecewise linear neural networks is to better integrate with existing verification tools.
Our neural network approximation of dynamics and constraints did introduce model mismatch to real-world robot dynamics, but that does not affect the validity of our benchmark because we can think of synthesis and verification as inherently performed on systems with approximated dynamics and constraints.
It will be our future work to investigate exact synthesis and verification with respect to nonlinear dynamics and constraints.

We follow the practice of \shortciteA{nagabandi2018neural} to train neural network dynamic models.
We use a neural network $f_\phi(x,u)$ to parameterize the change of state in a time step, i.e., the predicted next state is $\hat{x}_{t+1}=x_t+f_\phi(x_t,u_t)$.
Training data is collected by uniformly sampling initial states in the state space and executing random control inputs at every time step.
The collected data is recorded in the form of state transition pairs, i.e., $\mathcal{D}=\{(x^{(i)},u^{(i)},x'^{(i)})\}_{i=1}^N$, where $N$ is the number of data.
To ensure the loss function weights different state elements equally, we subtract the mean and divide the standard deviation of the data.
We then add zero-mean Gaussian noises with a standard deviation of $0.01$ to all data.
We train the dynamic model by minimizing the following loss function.
\begin{equation}
    L_\mathrm{dyn}(\phi)=\frac{1}{N}\sum_{i=1}^N\Vert(x'^{(i)}-x^{(i)})-f_\phi(x^{(i)},u^{(i)})\Vert_2^2.
\end{equation}
For Robot Dog, we directly train a closed-loop dynamic model $f_\phi(x)$ with training data collected by a neural network policy.
Data preprocessing and loss function of Robot Dog are similar to those of other tasks.

We use the same neural network structure in all tasks.
Both the dynamics and value networks have two hidden layers with 32 neurons each.
The constraint network has two hidden layers with 16 neurons each.
Other hyperparameters of our framework are listed in Table \ref{tab: hyperparameters}.
We ran all experiments on an AMD Ryzen 7 5800 8-Core CPU.

\begin{table}[]
    \centering
    \begin{tabular}{llc}
    \toprule
        Stage & Hyperparameter & Value \\
    \midrule
        \multirow{8}{*}{Pre-training} & Learning rate for dynamics network & 1e-3 \\
        & Learning rate for value network & 3e-4 \\
        & Batch size & 256 \\
        & Training epochs for dynamics network & 100 \\
        & Iterations for value network & 100000 \\
        & Discount factor for value network & 0.9 \\
        & Weight decay & 1e-3 \\
        & APA coefficient for dynamics network & 0.01 \\
        & APA coefficient for value network & 1e-4 \\
        & APA constant $\epsilon$ & 1e-4 \\
        & SNR coefficients for dynamics network & (0, 1e-3) \\
        & SNR coefficients for value network & (0, 0.1) \\
        & APA \& SNR noise scale & 0.1 \\
    \midrule
        \multirow{10}{*}{Adversarial training} & Learning rate & 1e-4 \\
        & Max iteration & 100000 \\
        & Batch size for counterexample search & 1000 \\
        & PGD steps per iteration & 10 \\
        & PGD step size & 0.1 \\
        & Backtracking steps & 20 \\
        & BGB search direction discount & 0.5 \\
        & BGB step length discount & 0.8 \\
        & ESR coefficient & 0.1 \\
        & ESR sample batch size & 1000 \\
        & ESR margin $\delta$ & 0.01 \\
    \bottomrule
    \end{tabular}
    \caption{Detailed hyperparameters.}
    \label{tab: hyperparameters}
\end{table}

\subsection{Evaluation Procedure and Metrics}
To evaluate synthesis results, we consider three aspects: 1) counterexample search efficiency, 2) verification efficiency, and 3) size of feasible region.

For counterexample search efficiency, we count the number of fine-tuning iterations, total fine-tuning time, and the number of verifications.
A smaller number of fine-tuning iterations means more counterexamples are found and eliminated in each iteration, thus indicating a higher search efficiency.
Fine-tuning time is an overall evaluation of the number of fine-tuning iterations and time consumption of each iteration, the latter of which largely depends on the time consumption of counterexample search.
Fine-tuning time also counts the time of all failed verifications, i.e., all verifications except the last one.
When the number of verifications exceeds one, all verification fails except the last one.
A failed verification means that counterexamples still exist, but adversarial training can no longer find them.
Therefore, more verifications also mean that counterexample search is less inefficient.
Note that calling verification does not increase fine-tuning iterations because it is a required step of each iteration to check whether verification should be called and call it when necessary.

For verification efficiency, we count the time of the final verification that proves the feasible region conditions hold.
This time plus the fine-tuning time equals the total synthesis time of the value function.

For the size of feasible region, we compute a metrics called the true feasible rate (TFR), which is defined as the proportion of states identified as feasible in all feasible states.
In practice, TFR is approximately computed on a certain number of states randomly sampled in the state space.
To determine whether a state is feasible, we check a finite-length trajectory starting from it.
The state is considered feasible if there is no constraint violation in the trajectory.
The trajectory length is set to 100 for all tasks, which is enough to give correct feasibility results in most cases.
For Robot Dog, \shortciteA{he2024agile} synthesized a neural value function without verification and used it as a safety filter in a real-world robot dog locomotion task.
We compare the neural value functions synthesized using their method and our framework to demonstrate the necessity and effectiveness of our framework.

\subsection{Synthesis Results}
Our proposed framework successfully synthesized neural HJ reachability value functions on all nine tasks in Cersyve-9, and the results are shown in Table \ref{tab: synthesis results}.
TFRs on most tasks are greater than $0.8$, and the lowest TFR is above $0.4$, indicating that the synthesized value networks represent non-trivial feasible regions.
As state dimension increases, it generally requires more fine-tuning iterations and time to synthesize a verified value network.
This is because searching counterexamples becomes more difficult in higher-dimensional spaces.
The number of verifications also shows a similar increase with state dimension.
Another observation is that systems with nonlinear dynamics generally require more verification time.
This is because nonlinear dynamics results in more linear segments of $\mathrm{M_{inv}}$ defined in \eqref{eq: models to verify} for forward invariance verification.

\begin{table}[]
    \centering
    \begin{tabular}{cccccc}
    \toprule
        Task & FT iter (k) & FT time (s) & \# Verify & Verify time (s) & TFR \\
    \midrule
        Double Integrator & 1.2 & 28.5 & 1 & 0.5 & 0.940 \\
        Pendulum & 1.1 & 31.4 & 1 & 3.9 & 0.962 \\
        Unicycle & 6.9 & 175.0 & 1 & 11.6 & 0.911 \\
        Lane Keep & 5.9 & 184.7 & 4 & 1.4 & 0.750 \\
        Quadrotor & 1.6 & 55.3 & 1 & 2.9 & 0.906 \\
        Cart Pole & 2.4 & 83.3 & 1 & 171.5 & 0.404 \\
        Point Mass & 8.6 & 321.7 & 4 & 48.2 & 0.594 \\
        Robot Arm & 25.2 & 853.8 & 15 & 5.5 & 0.403 \\
        Robot Dog & 6.3 & 333.7 & 2 & 311.9 & 0.872 \\
    \bottomrule
    \end{tabular}
    \caption{Neural HJ reachability value function synthesis results of our proposed framework.}
    \label{tab: synthesis results}
\end{table}

To understand the effect of fine-tuning on the value network, we plot the changing curves of TFR and the cumulative number of found counterexamples during fine-tuning in Robot Arm in Figure \ref{fig: TFR and cumulative counterexamples}.
At an early stage of fine-tuning, TFR decreases quickly, and counterexamples increase quickly.
In each iteration, the value network is updated on a large number of counterexamples, excluding many of them from the zero-sublevel set.
Verification cannot be called at this stage because there are many counterexamples found in every iteration.
After 10K iterations, counterexamples can hardly be found in each iteration, and therefore verification starts to be called frequently.
At this stage, the value network is updated on only a few counterexamples, mostly found by verification, in each iteration.
As a result, the change in TFR is very small.
This stage continues until the last verification proves the feasible region conditions hold and returns a valid value network.

\begin{figure}
    \centering
    \begin{subfigure}[b]{0.49\textwidth}
        \centering
        \includegraphics[width=\textwidth]{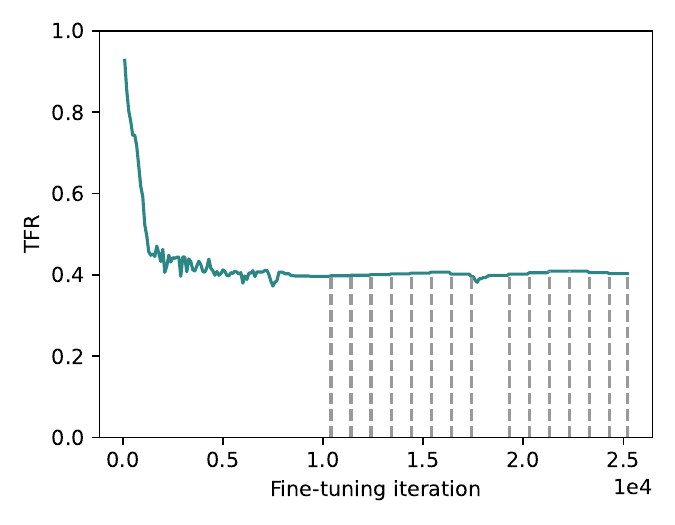}
        \caption{TFR.}
        \label{fig: TFR}
    \end{subfigure}
    \begin{subfigure}[b]{0.49\textwidth}
        \centering
        \includegraphics[width=\textwidth]{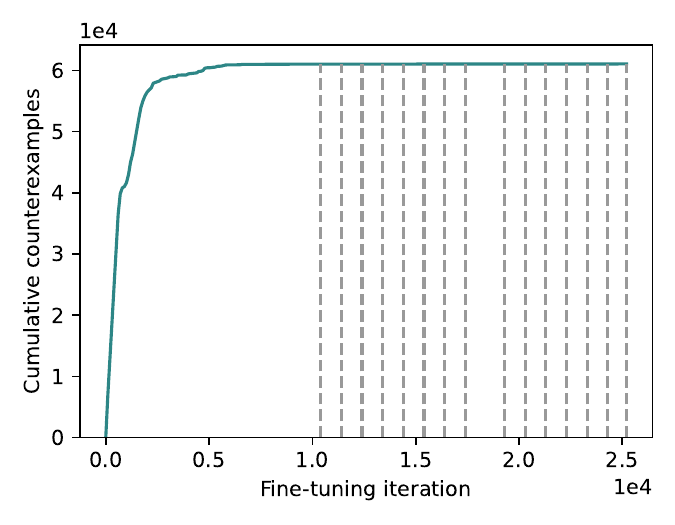}
        \caption{Cumulative counterexamples.}
        \label{fig: cumulative counterexamples}
    \end{subfigure}
    \caption{TFR and cumulative number of found counterexamples during fine-tuning in Robot Arm. The dashed gray lines stand for the iterations at which verification is called.}
    \label{fig: TFR and cumulative counterexamples}
\end{figure}

To demonstrate how counterexamples are eliminated through fine-tuning, we visualize the boundary of the zero-sublevel set and counterexamples before fine-tuning and the boundary after fine-tuning in Double Integrator in Figure \ref{fig: Double Integrator visualize}.
Before fine-tuning, there are many counterexamples near the boundary of the zero-sublevel set.
These counterexamples leave the zero-sublevel set in one step, violating the forward invariance condition and making the set not a valid feasible region.
This invalidity is also confirmed by the fact that the pre-trained region is larger than the true region, which is impossible for a valid feasible region.
After fine-tuning, all counterexamples are eliminated, and the zero-sublevel set shrinks into a valid feasible region slightly smaller than the true region.

\begin{figure}
    \centering
    \includegraphics[width=\textwidth]{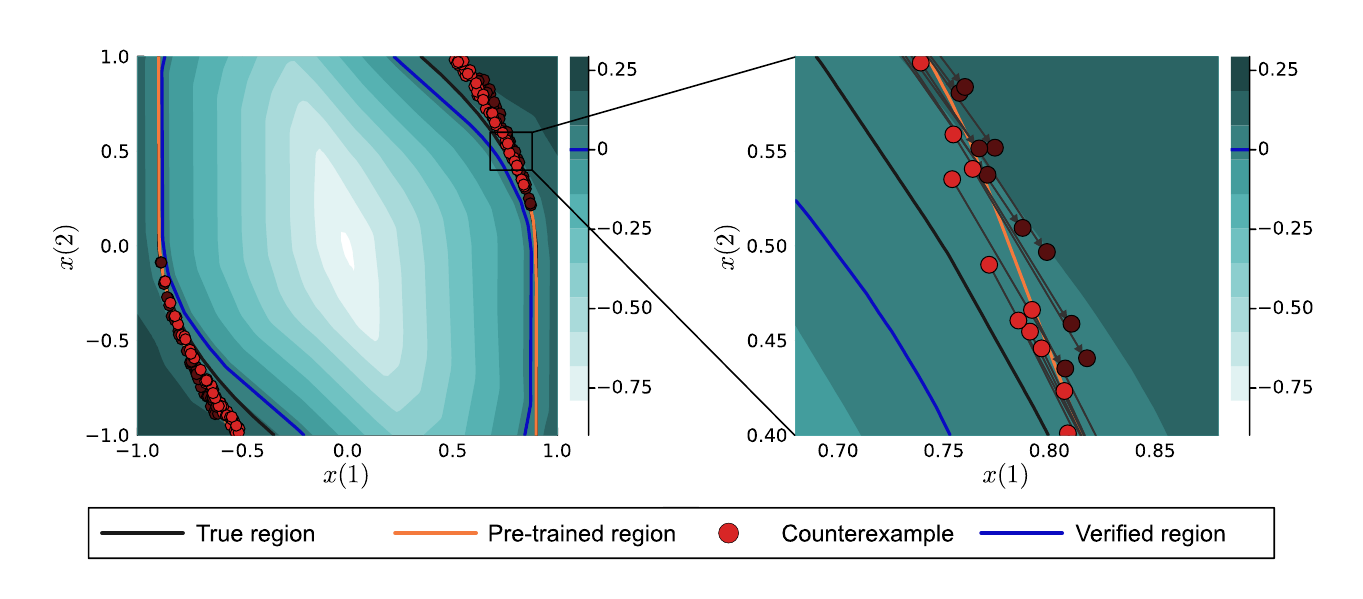}
    \caption{Regions and counterexamples in Double Integrator. The pre-trained and verified regions are zero-sublevel sets of the value networks before and after fine-tuning, respectively. The heatmap shows the contours of the value network after fine-tuning. The lighter red dots are counterexamples before fine-tuning, and the darker red dots are their next states.}
    \label{fig: Double Integrator visualize}
\end{figure}

To demonstrate the necessity and effectiveness of our method, we visualize the value trajectories and heatmaps of the value networks synthesized by \shortciteS{he2024agile} method and our method in Robot Dog, as shown in Figure \ref{fig: Robot Dog visualize}.
We choose an initial state that leads to a constraint violation and compare the values of the two networks on the trajectory.
The value of \shortciteS{he2024agile} network is negative on the initial state, and gradually increases to positive values along the trajectory.
This means that the state starts from inside the zero-sublevel set but goes out eventually, indicating that the zero-sublevel set of their value network is not a valid feasible region because it violates the forward invariance condition.
In contrast, our value network consistently outputs positive values on the whole trajectory, indicating that it correctly excludes the states from its zero-sublevel set.
Comparing the heatmaps of the two value networks, we can see that our network moves the infeasible region to the upper left.
This excludes infeasible states in the upper left from the zero-sublevel set and includes more feasible states in the lower right into the set.
As a result, the zero-sublevel set becomes a valid feasible region without a significant reduction in its size.

\begin{figure}
    \centering
    \begin{subfigure}[b]{0.49\textwidth}
        \centering
        \includegraphics[width=\textwidth]{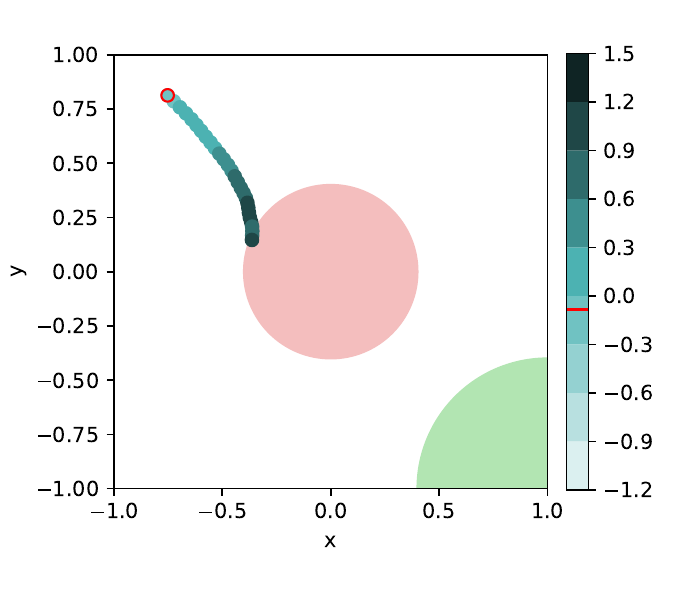}
    \end{subfigure}
    \begin{subfigure}[b]{0.49\textwidth}
        \centering
        \includegraphics[width=\textwidth]{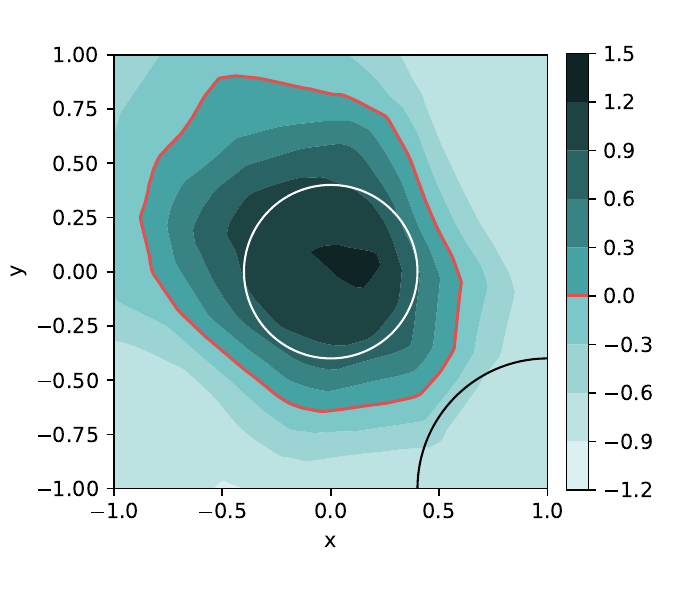}
    \end{subfigure}
    \begin{subfigure}[b]{0.49\textwidth}
        \centering
        \includegraphics[width=\textwidth]{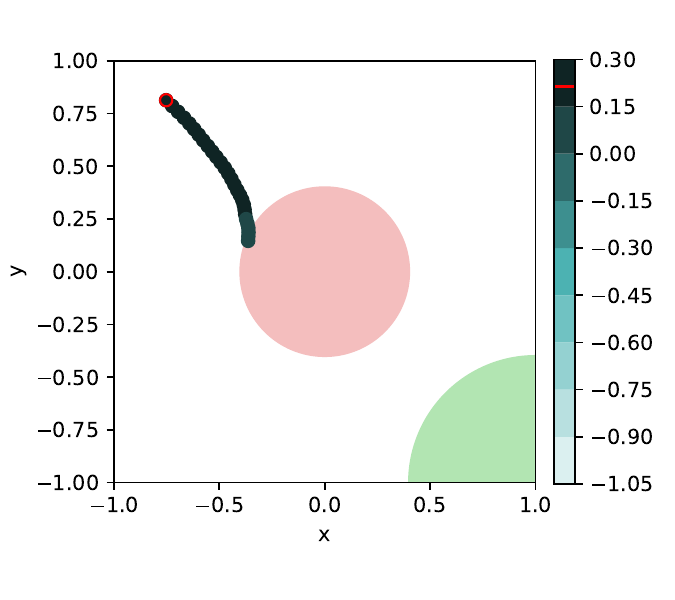}
    \end{subfigure}
    \begin{subfigure}[b]{0.49\textwidth}
        \centering
        \includegraphics[width=\textwidth]{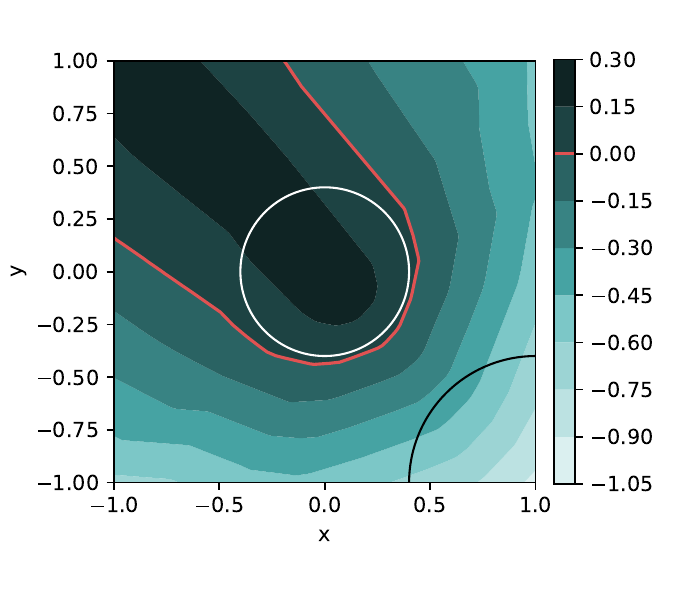}
    \end{subfigure}
    \caption{Value trajectories and heatmaps of value networks synthesized by \shortciteS{he2024agile} method (upper) and ours method (lower) in Robot Dog. In the left two figures, the red circle in the middle is an obstacle, and the green circle in the bottom right corner (a quarter shown) is the goal. The two trajectories start from the same initial state (marked with a red circle) and are sampled by the same policy. They are both truncated at a constraint-violating state.}
    \label{fig: Robot Dog visualize}
\end{figure}

\subsection{Comparison Studies}
In this subsection, we demonstrate the effectiveness of the three proposed techniques, i.e., APA, BGB, and ESR, by comparing them with several existing methods that aim to solve similar problems.

\subsubsection{Neural network regularization}
\label{sec: neural network regularization}
We compare APA with two existing neural network regularization methods, weight decay (WD) and SNR \shortcite{weiimprove}, to study its effectiveness in reducing verification time.
We compare these methods from three aspects: number of linear segments, network performance, and verification time.
First, we show the relationship between the number of linear segments and verification time.
Then, we compare the number of linear segments and network performance of the three regularization methods.
Finally, we compare the verification times of the three methods on all tasks in Cersyve-9.

Due to the branch-and-bound solving mechanism of MILP, verification time is closely related to the number of linear segments of neural networks \shortcite{wolsey2020integer}.
We use a sampling-based method to estimate the number of linear segments of a neural network.
Specifically, we uniformly sample a certain number of states in the state space, compute the neural network's activation pattern on each state, and count the number of unique activation patterns.
This gives us an underestimate of the number of linear segments, and this estimate becomes more accurate as the number of samples increases.
Figure \ref{fig: linear segment estimate} shows the relationship between the estimated number of linear segments and the number of samples.
The two have a linear relationship when the number of samples is small.
As the number of samples increases, the growth rate of linear segments decreases.
Theoretically, an infinite number of samples will give an accurate number of linear segments.
We use $10^6$ samples in our experiments to balance estimation accuracy and computational complexity.
Although this results in an underestimate, it reflects the relative number of linear segments of different methods, which is informative for comparing their verification times.
We visualize the relationship between the number of linear segments and verification time in Figure \ref{fig: linear segments vs. verification time}.
It shows that the two are approximately linearly related, which is consistent with the branch-and-bound MILP solving mechanism.
This allows us to approximately compare the verification times of different networks by comparing their number of linear segments without actually solving verification problems.

\begin{figure}
    \centering
    \begin{subfigure}[b]{0.49\textwidth}
        \centering
        \includegraphics[width=\textwidth]{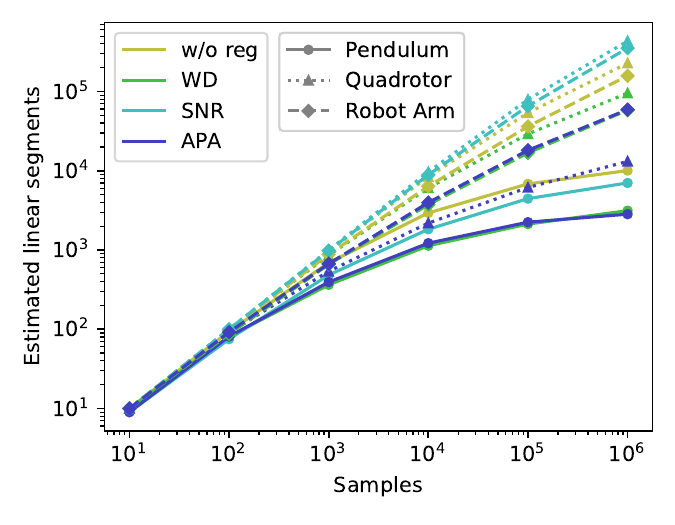}
        \caption{Linear segment estimate.}
        \label{fig: linear segment estimate}
    \end{subfigure}
    \begin{subfigure}[b]{0.49\textwidth}
        \centering
        \includegraphics[width=\textwidth]{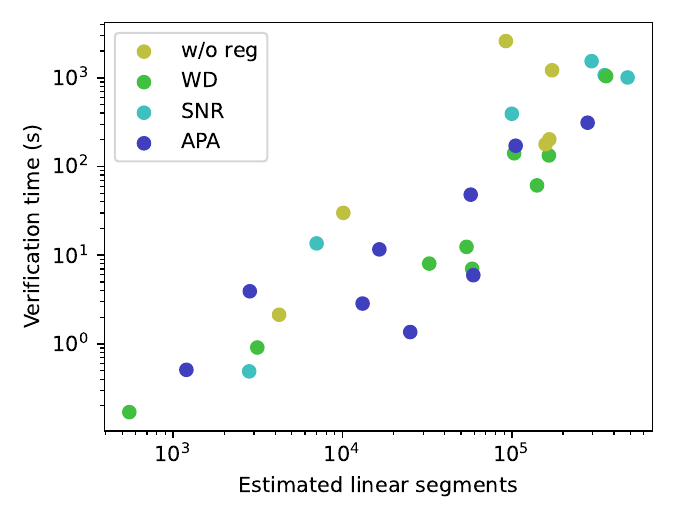}
        \caption{Linear segments vs. verification time.}
        \label{fig: linear segments vs. verification time}
    \end{subfigure}
    \caption{Estimated number of linear segments and its relationship with verification time. In Figure (b), dots with the same color represent the same method in different tasks. For a thorough comparison of verification times of different regularization methods on each task, see Table \ref{tab: verify time}.}
\end{figure}

We compare the number of linear segments of dynamics networks and value networks trained with different regularization methods, as shown in Figure \ref{fig: linear segments}.
Since regularization usually sacrifices the performance of neural networks, we also compare the performance of different regularization methods.
For dynamics networks, we compute the MSE on a test dataset for performance metrics.
For value networks, we compute TFR after fine-tuning (after the network is successfully verified) for performance metrics.
For a fair comparison, we use the same dynamics network trained with APA to train value networks in each task.
Figure \ref{fig: dynamics linear segments} shows that APA reduces linear segments of dynamics networks by about five times compared with no regularization, while WD and SNR both increase linear segments instead.
Moreover, APA has a much lower MSE compared with WD and SNR.
Figure \ref{fig: value linear segments} shows that both APA and WD significantly reduce linear segments of value networks, and APA brings a greater reduction of about four times.
SNR still results in increased linear segments of value networks.
In addition, APA has a higher TFR compared with WD and SNR, indicating larger feasible regions.
These results indicate that APA is the most effective in reducing linear segments with minimum performance sacrifice.
This is attributed to the appropriate design of APA penalty, which only takes effect when the sign of pre-activation values of neighboring states are different.
In contrast, WD and SNR are not directly targeted at making activation patterns consistent.
They penalize the network parameters at all times, resulting in large performance sacrifices and inefficiency in reducing linear segments.

\begin{figure}
    \centering
    \begin{subfigure}[b]{0.49\textwidth}
        \centering
        \includegraphics[width=\textwidth]{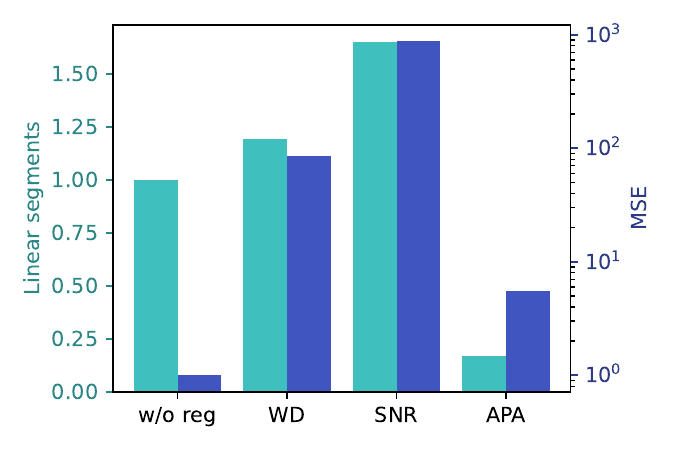}
        \caption{Dynamics networks.}
        \label{fig: dynamics linear segments}
    \end{subfigure}
    \begin{subfigure}[b]{0.49\textwidth}
        \centering
        \includegraphics[width=\textwidth]{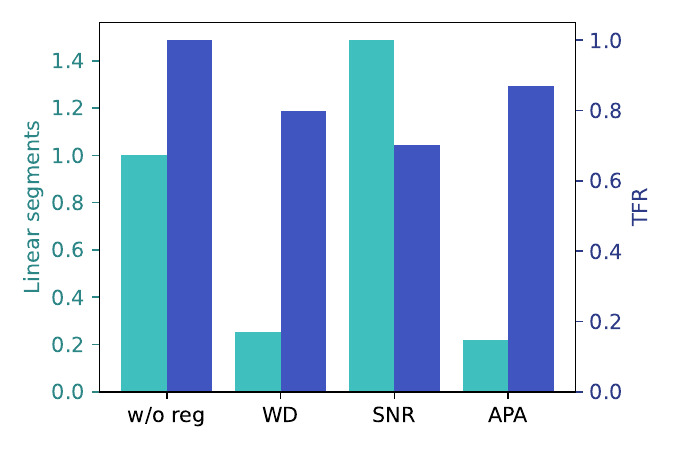}
        \caption{Value networks.}
        \label{fig: value linear segments}
    \end{subfigure}
    \caption{Number of linear segments and performance of dynamics networks and value networks trained with different regularization methods. The results of dynamics networks are averaged on five tasks with nonlinear dynamics. For each task, all scores are normalized by dividing those without regularization. The results of value networks are averaged on all tasks except Cart Pole and Robot Dog, where synthesis failed without regularization.}
    \label{fig: linear segments}
\end{figure}

The regularization strength of APA depends on the coefficient $\alpha_\mathrm{APA}$, which trades off between the number of linear segments and neural network performance.
We train dynamics and value networks under different values of $\alpha_\mathrm{APA}$ and visualize the results in Figure \ref{fig: APA coefficient}.
Figure \ref{fig: APA coefficient dynamics} shows that the number of linear segments of dynamics network quickly decreases as $\alpha_\mathrm{APA}$ increases from 0 to $10^{-3}$ and continues to decrease steadily as $\alpha_\mathrm{APA}$ increases from $10^{-3}$ to $10^{-1}$.
On the other hand, MSE also increases as $\alpha_\mathrm{APA}$ increases, and its increasing rate becomes faster.
An appropriate choice of $\alpha_\mathrm{APA}$ for dynamics network should be around $10^{-3}$ to $10^{-2}$, which balances the number of linear segments and MSE.
Figure \ref{fig: APA coefficient value} shows that both the number of linear segments of value network and TFR decreases as $\alpha_\mathrm{APA}$ increases.
The decrease rate of linear segments is relatively stable under different values of $\alpha_\mathrm{APA}$.
The decrease rate of TFR is small at first and gradually increases as $\alpha_\mathrm{APA}$ increases.
An appropriate choice of $\alpha_\mathrm{APA}$ for value network should be around $10^{-4}$.

\begin{figure}
    \centering
    \begin{subfigure}[b]{0.49\textwidth}
        \centering
        \includegraphics[width=\textwidth]{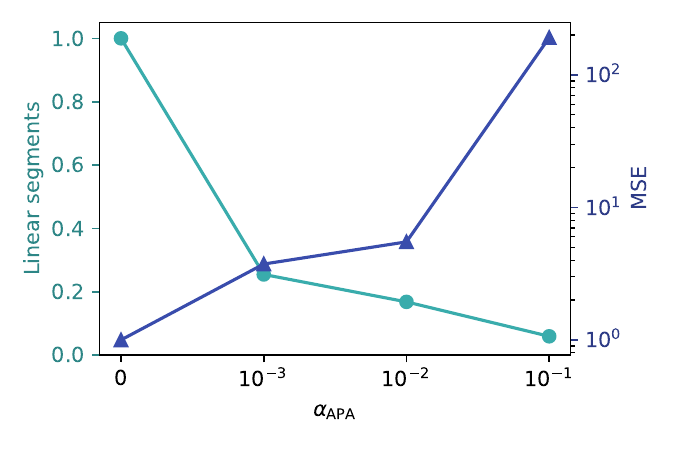}
        \caption{Dynamics networks.}
        \label{fig: APA coefficient dynamics}
    \end{subfigure}
    \begin{subfigure}[b]{0.49\textwidth}
        \centering
        \includegraphics[width=\textwidth]{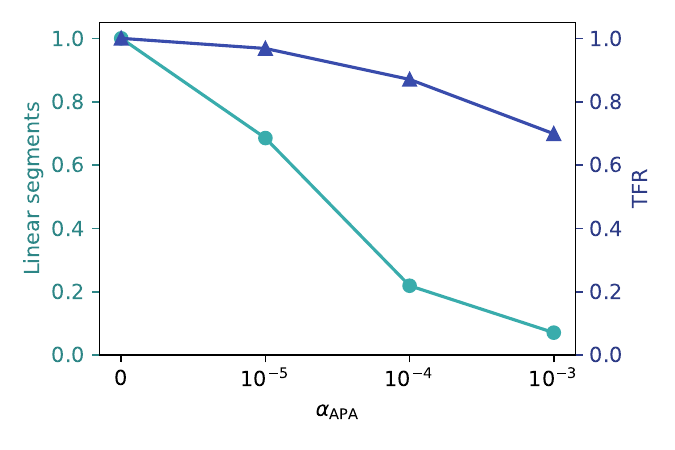}
        \caption{Value networks.}
        \label{fig: APA coefficient value}
    \end{subfigure}
    \caption{Number of linear segments and performance of dynamics networks and value networks trained under different APA coefficients. The results are averaged on the same tasks as in Figure \ref{fig: linear segments}. All scores are normalized by dividing those of $\alpha_\mathrm{APA}$=0.}
    \label{fig: APA coefficient}
\end{figure}

We compare the verification time of different regularization methods in Table \ref{tab: verify time}.
It shows that APA has the shortest verification time overall, especially in high-dimensional tasks.
In some tasks, such as Unicycle, Lane Keep, and Quadrotor, APA reduces the verification times by more than 100 times compared with no regularization.
Note that without regularization, verification may take much longer than the time limit (2 hours) in high-dimensional tasks.
The acceleration of verification brought about by APA greatly improves the scalability of our synthesis framework, enabling it to solve higher-dimensional tasks.
The superiority of APA is due to its effectiveness in reducing linear segments of both dynamics and value networks.
WD also significantly reduces the verification time compared with no regularization because it reduces linear segments of value networks.
However, the acceleration of WD is not so significant as that of APA in most nonlinear tasks because it cannot reduce linear segments of dynamics networks.
In contrast, SNR has longer verification times than no regularization, which is consistent with the fact that it increases linear segments of both dynamics networks and value networks.
SNR performs poorly because it is designed to increase the robustness of a neural network under disturbances, and the results show that this robustness-oriented objective does not always align with the objective of reducing the number of linear segments.

\begin{table}[]
    \centering
    \begin{tabular}{ccccc}
    \toprule
        Task & w/o reg & WD & SNR & APA \\
    \midrule
        Double Integrator & 2.1 & \textbf{0.2} & 0.5 & 0.5 \\
        Pendulum & 30.0 & \textbf{0.91} & 13.58 & 3.9 \\
        Unicycle & 2597.5 & 12.42 & 393.3 & \textbf{11.6} \\
        Lane Keep & 202.4 & 8.0 & 1540.9 & \textbf{1.4} \\
        Quadrotor & Timeout & 141.01 & Timeout & \textbf{2.9} \\
        Cart Pole & Timeout & \textbf{133.18} & 1009.1 & 171.5 \\
        Point Mass & 1216.6 & 61.2 & Timeout & \textbf{48.2} \\
        Robot Arm & 178.3 & 7.0 & 1071.8 & \textbf{5.5} \\
        Robot Dog & Timeout & 1043.9 & MaxIter & \textbf{311.9} \\
    \bottomrule
    \end{tabular}
    \caption{Verification time (in seconds) of different regularization methods. The dynamics networks (for nonlinear tasks) and value networks in each column are trained with the same regularization method. ``Timeout" means that fine-tuning exceeds the time limit (2 hours). ``MaxIter" means that fine-tuning exceeds the iteration limit (100k).}
    \label{tab: verify time}
\end{table}

\subsubsection{Counterexample search}
\label{sec: counterexample search}
We compare our counterexample search method, BGB, with two existing search methods: projected boundary search (PBS) proposed by \shortciteA{liu2023safe} and PGD with standard backtracking (PGD-B), to study its search efficiency.
We count the number of fine-tuning iterations and fine-tuning times of the three search methods, as shown in Table \ref{tab: search method comparison}.
Results show that BGB has the smallest number of fine-tuning iterations and fine-tuning time among the three methods, indicating that it has the highest counterexample search efficiency.
PBS has the lowest search efficiency, exceeding the iteration limit on most high-dimensional tasks.
This is because projecting the state to the boundary of feasible region in every step is unnecessary and significantly harms search efficiency\footnote{Projecting the state to the boundary of feasible region in every step is unnecessary not only for HJ reachability but also for other safety certificates such as CBF and CLF, at least in discrete-time systems, because the feasible region conditions are the same for all safety certificates.
The projection may become necessary in continuous-time systems where counterexamples must be exactly on the boundary.}.
We need the state to be close to the boundary only at the final step, not at all intermediate steps.
PGD-B also has lower search efficiency than BGB because standard backtracking can only search toward but not along the boundary, making it easy to get stuck near the boundary.

\begin{table}[]
    \centering
    \begin{tabular}{ccccccc}
    \toprule
        \multirow{2}{*}{Task} & \multicolumn{3}{c}{FT iter (k)} & \multicolumn{3}{c}{FT time (s)} \\
        \cmidrule(lr){2-4} \cmidrule(lr){5-7}
        & PBS & PGD-B & BGB & PBS & PGD-B & BGB \\
    \midrule
        Double Integrator & 1.6 & \textbf{1.2} & \textbf{1.2} & \textbf{25.3} & 50.1 & 28.5 \\
        Pendulum & 1.5 & \textbf{1.1} & \textbf{1.1} & \textbf{31.3} & 55.4 & 31.4 \\
        Unicycle & MaxIter & 54.1 & \textbf{6.9} & MaxIter & 3963.6 & \textbf{175.0} \\
        Lane Keep & 41.0 & 33.1 & \textbf{5.9} & 916.2 & 2370.0 & \textbf{184.7} \\
        Quadrotor & 21.8 & 3.1 & \textbf{1.6} & 578.3 & 223.7 & \textbf{55.3} \\
        Cart Pole & 43.8 & 62.6 & \textbf{2.4} & 1276.8 & 4590.0 & \textbf{83.3} \\
        Point Mass & MaxIter & 35.1 & \textbf{8.6} & MaxIter & 3199.2 & \textbf{321.7} \\
        Robot Arm & MaxIter & MaxIter & \textbf{25.2} & MaxIter & MaxIter & \textbf{853.8} \\
        Robot Dog & MaxIter & 51.5 & \textbf{6.3} & MaxIter & 3436.2 & \textbf{333.7} \\
    \bottomrule
    \end{tabular}
    \caption{Number of fine-tuning iterations and fine-tuning time of different counterexample search methods.}
    \label{tab: search method comparison}
\end{table}

\subsubsection{Feasible region regularization}
\label{sec: feasible region regularization}
We compare our feasible region regularization method, ESR, with RSR \shortcite{chang2019neural,liu2023safe} and no regularization to study its effectiveness in enlarging feasible regions.
Any feasible region regularization method will make fine-tuning harder because it inevitably includes some infeasible states into the zero-sublevel set.
To evaluate the negative impact on fine-tuning, we not only compute the TFR of the value networks but also count the number of fine-tuning iterations of each regularization method, as shown in Table \ref{tab: ESR TFR}.
It shows that ESR has the highest TFR on almost all tasks, significantly increasing TFR compared with no regularization, especially on high-dimensional tasks.
TFR of no regularization becomes smaller as the state dimension increases, indicating that fine-tuning tends to mistakenly exclude feasible states from the zero-sublevel set, resulting in feasible region shrinkage.
RSR also increases TFR compared with no regularization, but it is not so effective as ESR, and its number of fine-tuning iterations is not less than ESR.
This is because RSR randomly pushes all states into the zero-sublevel set, which will mistakenly include more infeasible states than ESR, resulting in lower regularization efficiency and a greater negative impact on fine-tuning.

\begin{table}[]
    \centering
    \begin{tabular}{ccccccc}
    \toprule
        \multirow{2}{*}{Task} & \multicolumn{3}{c}{TFR} & \multicolumn{3}{c}{FT iter (k)} \\
        \cmidrule(lr){2-4} \cmidrule(lr){5-7}
        & w/o reg & RSR & ESR & w/o reg & RSR & ESR \\
    \midrule
        Double Integrator & 0.897 & \textbf{0.960} & 0.940 & \textbf{1.1} & 1.2 & 1.2 \\
        Pendulum & 0.856 & 0.952 & \textbf{0.962} & \textbf{1.0} & \textbf{1.0} & \textbf{1.0} \\
        Unicycle & 0.671 & 0.907 & \textbf{0.911} & \textbf{1.2} & 3.9 & 6.9 \\
        Lane Keep & 0.651 & 0.693 & \textbf{0.750} & 3.5 & \textbf{1.6} & 5.9 \\
        Quadrotor & 0.872 & 0.901 & \textbf{0.906} & 2.2 & 2.3 & \textbf{1.6} \\
        Cart Pole & 0.014 & 0.207 & \textbf{0.404} & 4.1 & 14.5 & \textbf{2.4} \\
        Point Mass & 0.180 & 0.549 & \textbf{0.594} & \textbf{1.6} & 7.5 & 8.6 \\
        Robot Arm & 0.000 & \textbf{0.405} & 0.403 & \textbf{3.5} & 79.3 & 25.2 \\
        Robot Dog & 0.659 & 0.865 & \textbf{0.872} & \textbf{4.0} & 12.2 & 6.3 \\
    \bottomrule
    \end{tabular}
    \caption{TFR and number of fine-tuning iterations of different feasible region regularization methods.}
    \label{tab: ESR TFR}
\end{table}

\subsubsection{Ablation study}
We perform ablation studies to show how the proposed three techniques contribute to the reduction of overall synthesis time and the increase of TFR, and the results are shown in Figure \ref{fig: ablation}.

First, we test a baseline algorithm called Vanilla that directly minimizes MSE without neural network regularization in pre-training, uses PGD-B to search counterexamples, and performs fine-tuning without feasible region regularization.
Results show that Vanilla fails to synthesize value functions on three higher-dimensional nonlinear tasks, i.e., Cart Pole, Point Mass, and Robot Dog.
Moreover, it also fails on Robot Arm because the TFR is zero, i.e., the zero-sublevel set of the value function shrinks to an empty set.

\begin{figure}
    \centering
    \begin{subfigure}[b]{\textwidth}
        \centering
        \includegraphics[width=\textwidth]{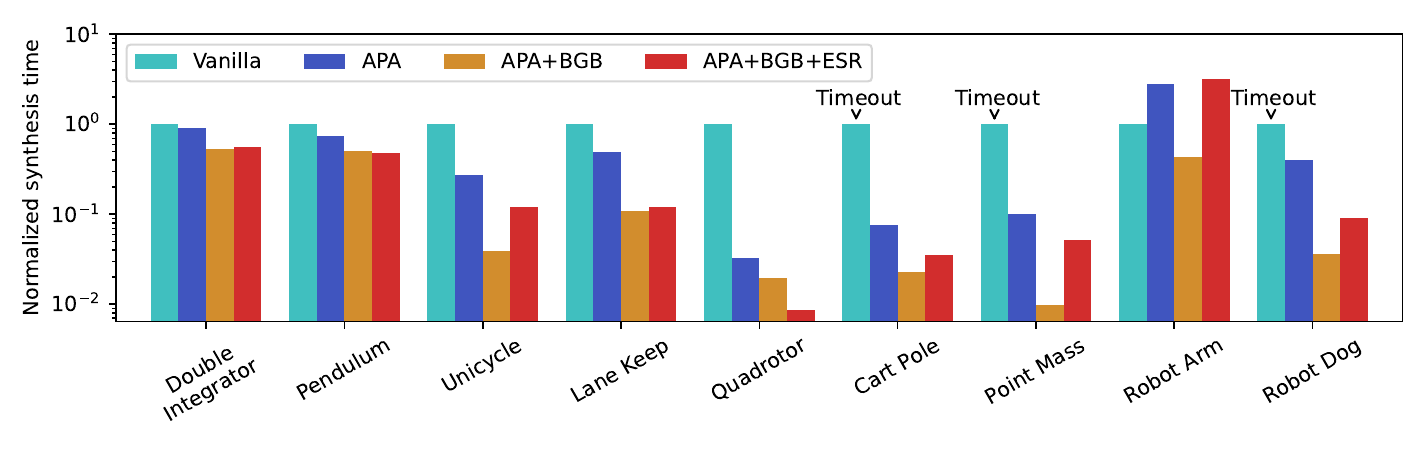}
    \end{subfigure}
    \begin{subfigure}[b]{\textwidth}
        \centering
        \includegraphics[width=\textwidth]{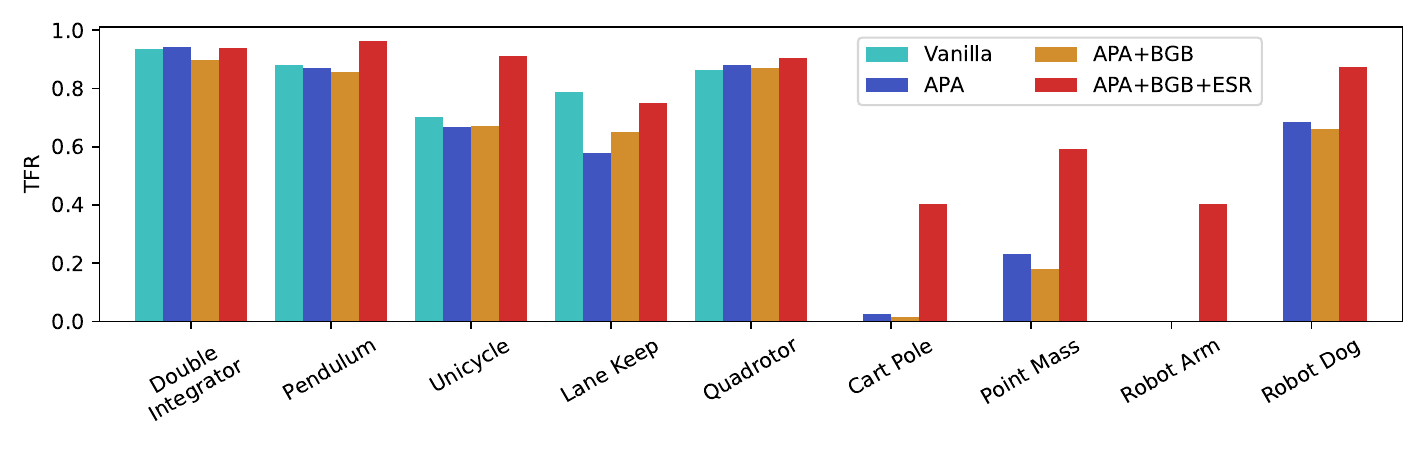}
    \end{subfigure}
    \caption{Ablation study of three techniques with respect to synthesis time and TFR. The normalized synthesis time is the synthesis time, i.e., fine-tuning time plus verification time, of each algorithm divided by that of Vanilla. The ``Timeout" annotation on top of the bars means the corresponding experiments exceed the time limit (2 hours), and we use the time limit for normalization in these tasks.}
    \label{fig: ablation}
\end{figure}

Next, we add APA in pre-training and keep the adversarial training part unchanged.
Results show that APA significantly reduces synthesis time on almost all tasks, especially higher-dimensional ones.
The comparison of synthesis time on Robot Arm is meaningless because all algorithms fail to synthesize a non-trivial value function except the last one that uses all three techniques.
APA's reduction of synthesis time is mainly attributed to its acceleration of verification, not only the last verification that proves hold but also intermediate failed verifications.
% The comparison of TFR between Vanilla and APA is of little reference value because their value functions are synthesized on different dynamics, which may inherently have different feasible regions.

Then, we add BGB for counterexample search and keep the fine-tuning loss unchanged.
Results show that BGB further reduces synthesis time on all tasks and does not cause significant changes in TFR.
BGB's reduction of synthesis time is mainly attributed to its acceleration of counterexample search, which results in fewer fine-tuning iterations.

Finally, we add ESR to fine-tuning loss, obtaining the complete version of our algorithm.
Results show that ESR substantially increases TFR on almost all tasks, especially Cart Pole and Robot Arm, where other algorithms fail or almost fail to synthesize non-trivial value functions.
Although the synthesis times of the complete algorithm increase compared with APA+BGB on some tasks, it still achieves a large acceleration compared with Vanilla.
Except for the first two lower-dimensional tasks, the acceleration compared with Vanilla is close to or more than 10 times on all tasks, and the acceleration on Quadrotor reaches about 100 times.
Note that Vanilla exceeds the time limit on three tasks, where the acceleration of our techniques could be much greater than that shown in the figure.
These results indicate that APA and BGB significantly reduce synthesis time, ESR substantially increases TFR, and these three techniques together significantly improve the scalability of our framework.

\section{Conclusion}
This paper proposes a scalable framework for formally synthesizing verified neural HJ reachability value functions.
The framework consists of three stages: pre-training, adversarial training, and verification-guided training.
We propose three techniques that significantly improve the scalability of our framework: boundary-guided backtracking (BGB) to accelerate counterexample search, entering state regularization (ESR) to enlarge feasible regions, and activation pattern alignment (APA) to accelerate MILP-based verification.
We also provide a neural safety certificate synthesis and verification benchmark called Cersyve-9, including nine commonly used safe control tasks.
Our framework successfully synthesizes verified neural value functions on all tasks in our benchmark.
Extensive experiments show that the three proposed techniques exhibit superior scalability and efficiency compared with existing methods.
While our experiments mainly focus on the synthesis side, the proposed benchmark could also foster additional study in the verification community on scaling up verification algorithms with respect to these unique types of problems.
For future work, there is still room for further scaling this framework to higher-dimensional problems.
Also, considering the approximation error of neural network dynamic models in verification is necessary for applying this framework to real-world tasks.
Moreover, this framework can be used to synthesize and verify other neural safety certificates, such as neural CBFs and CLFs.

\acks{This work is in part supported by the National Science Foundation under Grant No. 2144489. Any opinions, findings, and conclusions or recommendations expressed in this material are those of the authors and do not necessarily reflect the views of the National Science Foundation.}

% \appendix
% \section*{Appendix A. Probability Distributions for N-Queens}

% [section ommitted]

\vskip 0.2in
\bibliography{sample}
\bibliographystyle{theapa}

\end{document}